\newcommand{\myref}[1]{\textcolor{blue}{(\ref{#1})}}
\newtheoremstyle{mystyle}{}{}
{}{}{}
{.}{3pt}{{\it{\thmname{#1} \thmnumber{#2}}\thmnote{(#3)}}}
\theoremstyle{mystyle}
\newtheorem{proposition}{Proposition}
\begin{document}

\title{A Privacy-Preserving Localization Scheme with Node Selection in Mobile Networks}

\author{Liangbo Xie, \textit{Member, IEEE}, Mude Cai, Xiaolong Yang, Mu Zhou, \textit{Senior Member, IEEE} Jiacheng Wang, and Dusit Niyato, \textit{Fellow, IEEE}  
        % <-this % stops a space
\thanks{This work is supported in part by the National Natural Science Foundation of China (62101085, 62571074), the Chongqing Natural Science Foundation Project (CSTB2023NSCQ-MSX0249, CSTB2025NSCQ-LZX0142, CSTB2023NSCQ-LZX0126), the Science and Technology Research Program of Chongqing Municipal Education.}% <-this % stops a space
\thanks{Liangbo Xie, Mude Cai and Xiaolong Yang are with the School of Communications and Information Engineering, Chongqing University of Posts and Telecommunications, Chongqing 400065, China (e-mail:xielb@cqupt.edu.cn; s240131108@stu.cqupt.edu.cn; yangxiaolong@cqupt.edu.cn).}
\thanks{Mu Zhou is with the School of Electronic Science and Engineering, Chongqing University of Posts and Telecommunications, Chongqing 400065, China, and the Chongqing Key Laboratory of Dedicated Quantum Computing and Quantum Artificial Intelligence, Chongqing 400065, China.}
\thanks{Jiacheng Wang and Dusit Niyato are with the College of Computing and Data Science, Nanyang Technological University, Singapore 639798 (e-mail: jiacheng.wang@ntu.edu.sg, dniyato@ntu.edu.sg).}}

% The paper headers
\markboth{}%
{Shell \MakeLowercase{\textit{et al.}}: A Sample Article Using IEEEtran.cls for IEEE Journals}

% \IEEEpubid{0000--0000/00\$00.00~\copyright~2021 IEEE}
% Remember, if you use this you must call \IEEEpubidadjcol in the second
% column for its text to clear the IEEEpubid mark.

\maketitle
\begin{abstract}Localization in mobile networks has been widely applied in many scenarios. However, an entity responsible for location estimation exposes both the target and anchors to potential location leakage at any time, creating serious security risks. Although existing studies have proposed privacy-preserving localization algorithms, they still face challenges of insufficient positioning accuracy and excessive communication overhead. In this article, we propose a privacy-preserving localization scheme, named PPLZN. PPLZN protects protects the location privacy of both the target and anchor nodes in crowdsourced localization. Specifically, PPLZN introduces a novel Zero-Sum Noise Generation (ZSNG) method based on homomorphic encryption, which is used to construct a zero-sum noise set without revealing any individual anchor's noise term. This establishes the foundation for subsequent noise-adding protection. To ensure privacy across all participating nodes, PPLZN employs the zero-sum mechanism that conceals location-related parameters by adding zero-sum noise while enabling accurate position estimation. Simultaneously, homomorphic encryption ensures that the target's estimated location remains confidential throughout the computation. Furthermore, to address the explosive increase in computational and communication costs when the number of anchors grows, we propose a Node Selection Algorithm (NSA). By evaluating the contribution degree of Geometric Dilution of Precision (GDOP), NSA selects high-quality anchors, thereby reducing the number of nodes involved in localization and improving scalability. Simulation results validate the effectiveness of PPLZN. Evidently, it can achieve accurate position estimation without location leakage and outperform state-of-the-art approaches in both positioning accuracy and communication overhead. In addition, PPLZN significantly reduces computational and communication overhead in large-scale deployments, making it well-fitted for practical privacy-preserving localization in resource-constrained networks.
\end{abstract}

\begin{IEEEkeywords}
Privacy-preserving localization, zero-sum noise, Paillier encryption, node selection.
\end{IEEEkeywords}

\section{Introduction}

\IEEEPARstart{W}{ith} the increase in the demand for location sensing applications, location-based services (LBSs) with the help of intelligent devices are becoming increasingly concerned, providing high-quality services based on users’ locations \cite{1}. The basic premise of LBS is that users can obtain and upload precise and real-time locations using a location system. One of the most well-known positioning systems is the Global Navigation Satellite System (GNSS), which allows users to use transmitted signals to locate themselves from satellites orbiting the Earth. Although GNSS, such as Global Positioning System (GPS), is very effective, it may not always be available due to its limited coverage and signal blocking, especially in most indoor environments and some harsh outdoor environments\cite{2}. Fortunately, with the popularity and performance of smart terminal devices, ubiquitous terminal networks have become an alternative choice, allowing users to connect to nearby terminal reference points (also known as anchor points) to help analyze location-related parameters between them for self-positioning\cite{3}. In this framework, a variety of range-based positioning techniques can be used, including time of arrival (ToA), received signal strength (RSS), and time difference of arrival (TDoA) \cite{5}. The ranging-based collaborated localization is typically divided into three stages: anchor discovery, distance ranging, and location estimation\cite{13},\cite{39},\cite{1103}. First, the target must establish communication links with the anchors involved in the localization process. Second, the distances between the target and different anchors are measured. Third, the target's position is estimated based on the measured distances and the anchors' position. 

Although collaborative positioning can achieve good positioning performance, there is a security risk of exposing the nodes' position. This vulnerability arises because the anchors must share location-related information during the collaborative process, which malicious actors could exploit to infer precise positions through analysis of the exchanged data \cite{1-3},\cite{1-4}. Moreover, the anchors may not be limited to fixed base stations but often include mobile vehicles, drones, and other devices, which are typically unwilling to disclose location information publicly. Likewise, the targets prefer to keep their location results private and not shared with others\cite{1-5}. Location privacy is further threatened when a third-party server—rather than the user itself—performs the position estimation, creating opportunities for passive data leakage. Therefore, positioning methods for privacy protection have emerged one after another to solve the above problems \cite{7},\cite{8},\cite{9},\cite{10},\cite{11},\cite{24}.

In recent years, researchers have proposed a variety of schemes for privacy-preserving localization. The core privacy protection techniques in these schemes mainly include cryptographic methods, obfuscation strategies, and perturbation mechanisms. Halder and Newe developed a symmetric homomorphic encryption scheme to enable end-to-end encryption, ensuring data confidentiality against unauthorized access \cite{7}. Zeng et al. employed secret sharing to maintain mutual confidentiality of positions during information exchange \cite{8}, allowing the target's estimated location to remain hidden by encoding inputs into polynomials. Building on this, some studies combined homomorphic encryption to preserve location privacy not only among users but also from the server. Wang et al. proposed a privacy-preserving indoor localization scheme, DP3, which applies the exponential mechanism on the server side to generate perturbed noise that masks true locations, thereby simultaneously protecting both client-side location privacy and server-side data privacy \cite{11}. Li et al. introduced a method combining distance and angle measurements for single-anchor positioning; after anchors add noise to their results and transmit them to the user, more accurate positioning can be obtained \cite{12}. The follow-up work further enhanced this scheme by integrating anchor quality assessment \cite{13} and anchor-assisted mechanisms \cite{14}, thus improving positioning accuracy while maintaining the original privacy protection.

In addition, researchers have developed privacy-preserving localization schemes to minimize the impact on localization accuracy across different scenarios, often requiring a trade-off between security and efficiency. Consequently, several studies have explored the integration of multiple techniques to strengthen privacy protection. For example, Li and Sun employed secret sharing to securely collect measurement data and positions from participants, thereby preventing collectors from inferring private information through differential attacks \cite{16}. In Wi-Fi fingerprint-based localization models, Alikhani et al. proposed a method to protect user privacy against anonymity attacks by leveraging Hilbert curves and dual encryption to enhance privacy protection \cite{15}. Nieminen and Järvinen designed a hybrid approach combining homomorphic encryption with garbled circuits (GC) to address server-side security vulnerabilities \cite{17}. Zhang et al. further improved privacy protection by integrating private blockchain with a zero-sum noise injection mechanism \cite{18}.

Among the above approaches, homomorphic encryption and secret sharing provide strong privacy protection but require complex mathematical operations and encoding transformations between plaintext and ciphertext. As a result, directly applying homomorphic encryption to provide localization services leads to substantial computational and communication overhead. Differential Privacy (DP), a representative perturbation mechanism, preserves privacy by injecting suitable noise in dataset \cite{11}. However, the added noise inevitably reduces data utility and degrades localization accuracy, making DP unsuitable for scenarios demanding high positioning precision. In contrast, zero-sum noise achieves protection by ensuring that added-noise terms are canceled out during subsequent computations. This mechanism secures data without compromising positioning accuracy. However, if zero-sum noise is transmitted in plaintext, it is susceptible to intercepting or eavesdropping by malicious adversaries \cite{1021}.

These limitations highlight the need for a privacy-preserving localization algorithm that simultaneously ensures strong confidentiality and high positioning accuracy while avoiding excessive computational and communication costs. Motivated by this, we propose a novel privacy-preserving localization scheme, named PPLZN. We adopt the ToA algorithm for ranging in mobile networks. Within this framework, anchors operate under a mutually distrusted paradigm, and targets maintain adversarial suspicion toward the aggregator. Based on these premises, PPLZN ensures that the location privacy of every entity—including all anchors, the target, and the aggregator—is preserved, while the estimated position of the target remains known only to itself. The main contributions of this article are summarized as follows.
\begin{itemize}
    \item We propose a zero-sum noise generation mechanism based on Paillier homomorphic encryption, which integrates strong cryptographic confidentiality with noise cancellation that does not affect positioning accuracy. This design provides a methodological foundation for efficient privacy-preserving localization.

    % 1, which protects the position privacy of all anchors, the target, and the aggregator throughout the positioning process
    \item We propose PPLZN, a complete privacy-preserving localization scheme. To improve scalability, we further design a Node Selection Algorithm (NSA) that reduces overhead in dense-anchor scenarios by selecting suitable anchors based on contribution of Geometric Dilution of Precision (GDOP), without leaking location-related information.
    
    \item We propose a rigorous security and performance evaluation framework, demonstrating through cryptographic proofs that the scheme achieves theoretical privacy guarantees under the defined model. Extensive simulations further confirm its practical advantages, showing reduced computational and communication overhead and improved efficiency in dense deployments.
\end{itemize}

The rest of this article is organized as follows. Section II introduces the system model, outlines the Paillier encryption scheme, and formulates the problem. Section III presents the design of the proposed PPLZN framework. Section IV provides a comprehensive performance analysis of PPLZN. Finally, Section V concludes the article.

\section{SYSTEM MODEL AND PROBLEM FORMULATION}
This section presents basic technical introductions, including the system model, conventional ToA localization, Paillier encryption schemes, and the problem formulation of privacy-preserving localization.
\subsection{System Model and Conventional ToA Localization}
We first describe three types of entities involved in the localization process \cite{9}, which are defined below.
\begin{itemize}
    \item {\textit{Target} $\mathbb{T}$}: The device needs to acquire its true position $\mathbf{p}_{0} = [x_{0}, y_{0}, z_{0}]^\text{T}$ through range measurements with the anchors, and we assume its estimated position $\hat{\mathbf{p}}_{0} = [\hat{x}_{0}, \hat{y}_{0}, \hat{z}_{0}]^\text{T}$.
    
    \item {\textit{Anchor} $\mathbb{A}_{i}$}: The anchors serve as reference entities for target positioning, typically possessing known positions denoted as $\mathbf{p}_{i} = [x_{i}, y_{i}, z_{i}]^\text{T}$. 
    
    \item {\textit{Aggregator(or Server)} $\mathbb{G}$}: The device is responsible for performing specific ciphertext calculations.
\end{itemize}

We consider a collaborative localization scheme within a network. In this scheme, the target has lost its position. Consequently, it transmits a localization request to neighboring anchors within its communication range. The anchors send location-related information back to the target. Using the information received from these anchors, the target can estimate its position\footnote{The main goal of this article is to explore a new privacy-preserving localization scheme from a theoretical point of view. Therefore, we do not consider the issues such as noise, non-line-of-sight (NLoS), and synchronization \cite{24} ,which can be left as the future work.}. We assume that the scheme contains one target and ${m}$ anchors \cite{9}. The estimated distance between the target $\mathbb{T}$ and the anchor $\mathbb{A}_{i}$ is indicated by ${d}_{i}$. Fig. \ref{fig1} shows the cooperative positioning scenario of network.
\begin{figure}[!t]
\centering
\includegraphics[width=2.8in]{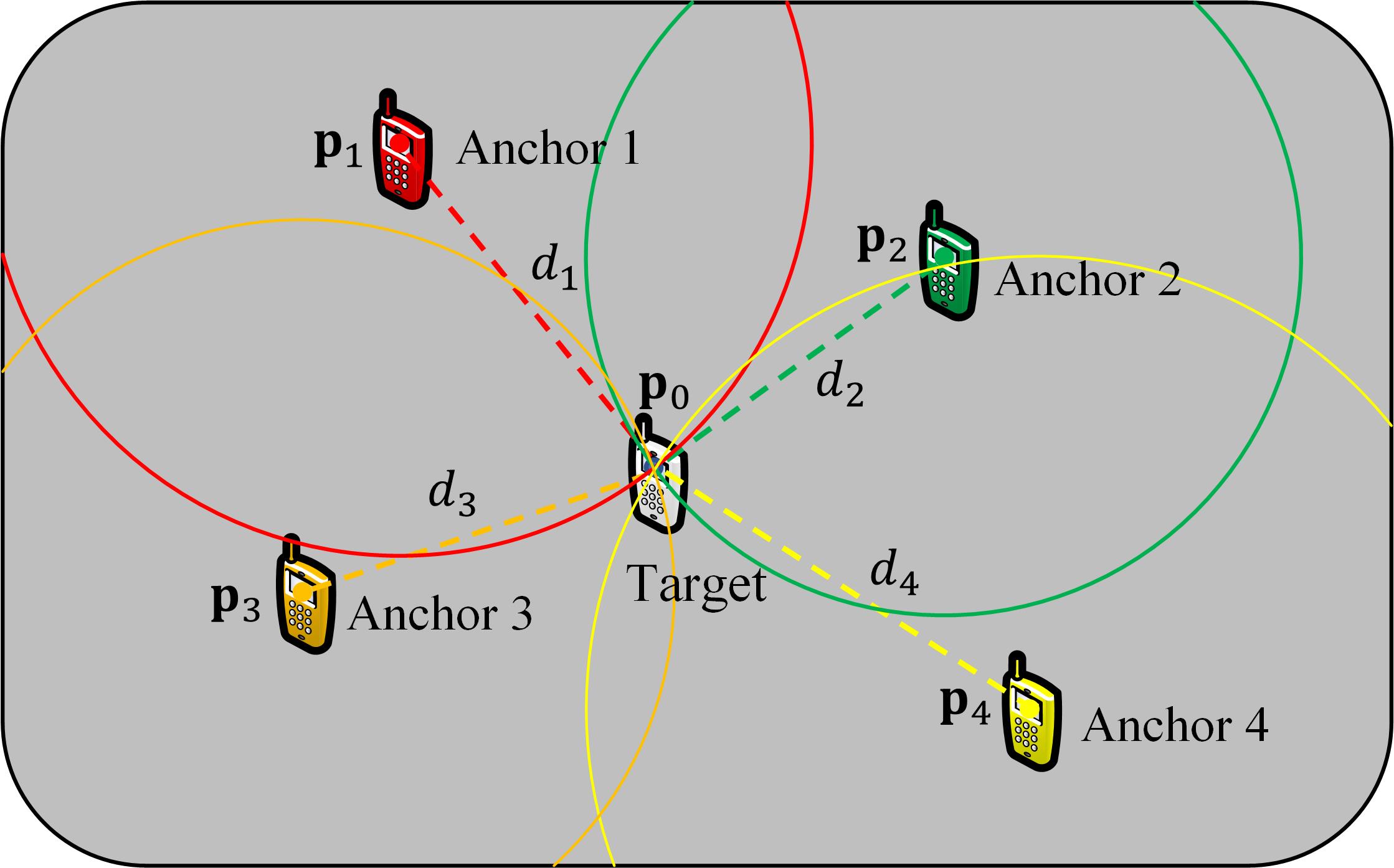}
\caption{Overview of cooperative positioning scenario. The target estimates distances to four surrounding anchors (${d}_{1}$-${d}_{4}$) and  utilizes a multilateration approach to acquire its own position $\mathbf{p}_{0}$ by leveraging location-related information from nearby mobile anchors.}
\label{fig1}
\end{figure}

The proposed ToA localization algorithm operates through two phases in the described scenario. During the range phase, the signal propagation time measurements yield distance estimates ${d}_{i}$ between the target and each anchor node ${i}$. Geometrically, each distance ${d}_{i}$ defines a spherical solution space centered at anchor coordinates $\mathbf{p}_{i}$ with radius ${d}_{i}$. The positioning phase subsequently resolves the target coordinates $\mathbf{p}_{0}$ using geometric intersection techniques, such as trilateral positioning \cite{19}. As depicted in Fig. \ref{fig1}, these spheres converge at a unique point under ideal conditions corresponding to the target location. 

In the above scenario, the estimated distance between the target $\mathbb{T}$ and the anchor $\mathbb{A}_{i}(i=1,2,\ldots,m)$ is expressed as
\begin{align}
d_{i}^{2} &= v^{2} (T_{i} - T_{0i})^{2} \nonumber\\
           &= \|\mathbf{p}_{i} - \mathbf{p}_{0}\|_{2}^{2} \nonumber\\
           &= x_{i}^{2} + y_{i}^{2} + z_{i}^{2} + x_{0}^{2} + y_{0}^{2} + z_{0}^{2}  \nonumber \\
           &\quad -2 (x_{i} x_{0} + y_{i} y_{0} + z_{i} z_{0}), \label{eq1}
\end{align}
where ${v}$ is the propagation speed of the signal (usually the speed of light). $T_{i}$ and $T_{0i}$ are the timestamps of the range signal received by the anchor $A_{i}$ and the timestamp of the range signal sent by the target $\mathbb{T}$, respectively. Let $R_i=x_i^2+y_i^2+z_i^2$, a set of distance equations is given as
\begin{equation}
\begin{cases}
  d_{1}^{2} - R_{1} &= -2(x_{1}x_{0} + y_{1}y_{0} + z_{1}z_{0}) + R_{0} \\
  d_{2}^{2} - R_{2} &= -2(x_{2}x_{0} + y_{2}y_{0} + z_{2}z_{0}) + R_{0} \\
  &\quad\vdots \\
  d_{m}^{2} - R_{m} &= -2(x_{m}x_{0} + y_{m}y_{0} + z_{m}z_{0}) + R_{0}. \label{eq2}
\end{cases} 
\end{equation}
\myref{eq2} can be converted into a matrix expression, given as
\begin{equation}
\mathbf{b} = \mathbf{A} \mathbf{x}, \label{eq3}
\end{equation}
where
\begin{equation}
\mathbf{b} = \left[ d_1^2 - R_1,\  d_2^2 - R_2,\  \ldots,\  d_m^2 - R_m \right]^{\mathrm{T}}, 
\label{eq4}
\end{equation}
\begin{equation}
\mathbf{A} = 
\begin{bmatrix} 
-2x_1 & -2y_1 & -2z_1 & 1 \\ 
-2x_2 & -2y_2 & -2z_2 & 1 \\ 
\vdots & \vdots & \vdots & \vdots \\ 
-2x_m & -2y_m & -2z_m & 1 
\end{bmatrix}, 
\label{eq5}
\end{equation}

\begin{equation}
\mathbf{x} = 
\begin{bmatrix} 
x_0 & y_0 & z_0 & R_0 
\end{bmatrix}^{\mathrm{T}}. 
\label{eq6}
\end{equation}
Assuming that the distance measurement noise is Gaussian noise, we can calculate the target position by minimizing the mean squared error (MMSE) between the true distance and the range distance as follows:
\begin{equation}
\mathbf{x} = \left( \mathbf{A}^{\mathrm{T}} \mathbf{A} \right)^{-1} \mathbf{A}^{\mathrm{T}} \mathbf{b}.
\label{eq7}
\end{equation}
Hence, the position of target is
\begin{equation}
\hat{\textbf{p}}_0 = 
\begin{bmatrix} 
\mathbf{x}(1), & \mathbf{x}(2), & \mathbf{x}(3) 
\end{bmatrix}^{\mathrm{T}}.
\label{eq8}
\end{equation}

\subsection{Paillier Encryption Scheme}
The Paillier cryptosystem is a partially homomorphic public key encryption scheme proposed by P. Paillier in 1999 \cite{20}. It comprises three core algorithms \cite{21}:
\begin{itemize}
    \item Key Generation: Select two large primes $p$ and $q$ of equal length satisfying 
    \begin{equation}
    \gcd(pq, (p-1)(q-1)) = 1,
    \label{eq9}
    \end{equation}
    where $\text{gcd}(\cdot, \cdot)$ is the largest common divisor of two natural numbers. Then, compute the RSA modulus $n=pq$ and $\lambda = \text{lcm}(p-1,q-1)$, where $\text{lcm}(\cdot, \cdot)$ computes the least common multiple of two integers. Next, select a random integer $g\in Z^*_{n^2}$ and compute
    \begin{equation}
    \alpha = \left( L\left( g^{\lambda}\mod n^2 \right) \right)^{-1} \mod n.
    \label{eq10}
    \end{equation}
    Finally, the public key $pk=(n,g)$, and the corresponding private key $sk=(\lambda,\alpha)$ \cite{22}.
    \item Encryption: Input plaintext $m\in Z_n$ and a random integer $r\in Z_n$, the ciphertext $c$ is computed as follows:
    \begin{equation}
    c = g^{m} \cdot r^{n} \mod n^{2}.
    \label{eq11}
    \end{equation}
    The Paillier encryption of $m$ is expressed by $\llbracket m\rrbracket_{pk}$.
    \item Decryption: Input ciphertext $c\in Z_{n^2}^*$. The corresponding plaintext is computed as 
    \begin{equation}
    m=L(c^{\lambda}\mod n^2)\cdot \alpha\mod n.
    \label{eq12}
    \end{equation}
    The Paillier decryption of $c$ is expressed by $Decr(c)$.
\end{itemize}

Paillier encryption, which has additive homomorphic properties \cite{23}, is central to our design. In this work, we denote multiplication between a plaintext and a ciphertext by $\otimes$, and addition between two ciphertexts by $\oplus$. The homomorphic addition and homomorphic scalar multiplication properties of Paillier encryption are exemplified by \myref{eq13} and \myref{eq14}, respectively.
\begin{align}
\forall m_{1},& m_{2} \in \mathbb{Z}_{n}  \text{ , } k \in \mathbb{N} \nonumber \\
\llbracket m_1 \rrbracket_{pk} \oplus \llbracket m_2 \rrbracket_{pk} &= \llbracket m_1 \rrbracket_{pk} \llbracket m_2 \rrbracket_{pk} \mod n^{2}  \nonumber\\
               &= \llbracket m_1 + m_2 \rrbracket_{pk} \mod n, \label{eq13}\\
 k\otimes \llbracket m_1\rrbracket _{pk}&= (\llbracket m_1 \rrbracket_{pk}^{k} \mod n^{2}). \label{eq14}
\end{align}
where $m_{1}$ and $m_{2}$ are two different plaintext messages in the integer ring of modulus $n$. $k$ is a natural number, $\llbracket \cdot \rrbracket_{pk}$ represents the ciphertext encrypted with public key $pk$, and $\llbracket \cdot \rrbracket_{pk}^{k}$ denotes the ciphertext raised to the $k$-th power.

From the homomorphic addition and homomorphic scalar multiplication properties of Paillier encryption, the multiplication between a plaintext matrix and a ciphertext vector can be performed as follows:
\begin{equation}
\mathbf{A} \otimes \llbracket \mathbf{m} \rrbracket_{pk} = 
\begin{bmatrix}
\mathbf{A}_{11} \otimes \llbracket \mathbf{m}(1) \rrbracket_{pk} \oplus \cdots \oplus \mathbf{A}_{1n} \otimes \llbracket \mathbf{m}(n) \rrbracket_{pk} \\
\vdots \\
\mathbf{A}_{n1} \otimes \llbracket \mathbf{m}(1) \rrbracket_{pk} \oplus \cdots \oplus \mathbf{A}_{nn} \otimes \llbracket \mathbf{m}(n) \rrbracket_{pk}
\end{bmatrix},
\label{15}
\end{equation}
where $\mathbf{A} \in \mathbb{Z}^{n \times n}$ and $\mathbf{m} \in \mathbb{Z}^{n}$.

\subsection{Geometric Dilution of Precision}
GDOP has been proposed to evaluate the influence of the geometric distribution of anchors on positioning performance \cite{25}. In general, larger GDOP values correspond to greater positioning errors \cite{26}. In line-of-sight environments, all the measurement errors can be considered to be zero-mean independent and identically distributed Gaussian variables in ToA positioning systems \cite{27}. \myref{eq1} is differentiated into
\begin{equation}
\frac{x_0-x_i}{d_i} d(x_0) + \frac{y_0-y_i}{d_i} d(y_0) + \frac{z_0-z_i}{d_i} d(z_0) = d(d_i). \label{eq16}
\end{equation}
These equations can be represented in matrix form as
\begin{equation}
\mathbf{H}_m d\mathbf{p} = d\mathbf{D}, \label{eq17}
\end{equation}
where
\begin{equation}
\mathbf{H}_m = 
\begin{bmatrix}
\frac{x_0-x_1}{d_1} & \frac{y_0-y_1}{d_1} & \frac{z_0-z_1}{d_1} \\
\vdots & \vdots & \vdots \\
\frac{x_0-x_m}{d_m} & \frac{y_0-y_m}{d_m} & \frac{z_0-z_m}{d_m}
\end{bmatrix}, \label{eq18}
\end{equation}
\begin{equation}
d\mathbf{p} = 
\begin{bmatrix}
d x_0 \\ 
d y_0 \\ 
d z_0 \label{eq19}
\end{bmatrix}, 
\end{equation}
\begin{equation}
d\mathbf{D} = 
\begin{bmatrix}
d(d_1) \\ 
\vdots \\ 
d(d_m)  \label{eq20}
\end{bmatrix}. 
\end{equation}
GDOP is defined as
\begin{equation}
\text{GDOP}_m = \sqrt{\operatorname{trace}(\mathbf G_m)}, \quad 
\mathbf G_m = (\mathbf{H}_m^\text{T} \mathbf{H}_m)^{-1}. \label{eq21}
\end{equation}
where $\operatorname{trace}(\cdot)$ represents the trace of the matrix, and $\mathbf{H}_{m}$ is the observation matrix with $m$ anchors.

To minimize GDOP, a traversal search can be applied. However, because GDOP computation requires matrix inversion and multiplication, the computational cost grows rapidly with the number of anchors. To address this issue, we introduce a reverse star selection algorithm, which efficiently identifies the anchor set with the optimal contribution to GDOP. The definition of contribution degree is derived as follows. The measurement matrix can be expressed as 
\begin{equation}
\mathbf{H}_{m} = \left[ \mathbf{H}_{m-1}^\text{T} ,\mathbf{h}_{i}^\text{T} \right]^\text{T},
\label{eq22}
\end{equation}
where $\mathbf{h}_{i}$ is the row vector of the observation matrix corresponding to the excluded anchor $\mathbb A_{i}$. So, the GDOP is rewritten as
\begin{equation}
\text{GDOP}_{m}^{2} = \operatorname{trace} \left( \mathbf{H}_{m-1}^\text{T} \mathbf{H}_{m-1} + \mathbf{h}_{i}^\text{T} \mathbf{h}_{i} \right)^{-1}.
\label{eq23}
\end{equation}
It can be further expressed as
\begin{equation}
\Delta \text{GDOP}_{i}^{2} = \text{GDOP}_{m-1}^{2} - \text{GDOP}_{m}^{2} = \operatorname{trace} \left( \frac{ \mathbf{G}_{m} \mathbf{h}_{i}^\text{T} \mathbf{h}_{i} \mathbf{G}_{m} }{1 - \mathbf{h}_{i} \mathbf{G}_{m} \mathbf{h}_{i}^\text{T} } \right) \label{eq24}
\end{equation}
where $\Delta \text{GDOP}_{i}^{2}$ denotes the contribution degree of the anchor $\mathbb A_{i}$, representing the change in the squared GDOP value of the set when the anchor is excluded. Therefore, the contribution degree of GDOP provides a quantitative measure for assessing the effect of an anchor on positioning accuracy. In this work, we investigate anchor node selection strategies based on the contribution degree of GDOP.

\subsection{Privacy-Preserving Localization}
Following the common practice in privacy-preserving localization studies \cite{29},\cite{34}, this work adopts the honest-but-curious model. All participating entities execute the protocol faithfully but may attempt to infer private information from others. We  consider the case where the target issues a single localization request, and the anchors respond accordingly.
Our goal is to design a privacy-preserving ToA-based localization scheme that simultaneously guarantee practical utility requirements:
\begin{itemize}
    \item Target Location: The target can calculate its position using the PPLZN scheme, which is equivalent to the MMSE estimation in \myref{eq8}.
    \item Privacy Preservation: For the target $\mathbb{T}$ and any anchor $\mathbb{A}_i(1,2,\ldots,m)$, the location of the target and the anchors cannot be estimated by others.
    \item High Efficiency: To ensure applicability in real-world scenarios—especially under high anchor density—performance analysis demonstrates that our algorithm achieves lower computational and communication overhead compared to those of existing privacy-preserving localization schemes.
\end{itemize}

\section{DESIGN OF PPLZN}
In this section, we present PPLZN, a scheme designed to ensure that no location-related information of any node is disclosed during the localization process. As illustrated in Fig. \ref{fig_frame}, the scheme consists of three modules: zero-sum noise generation module, NSA module, and ToA-based privacy-preserving localization module. The following will explain in detail the principles of each module and the system framework.
\begin{figure*}[!t]
\centering
% \subfloat[]{\includegraphics[width=7in]{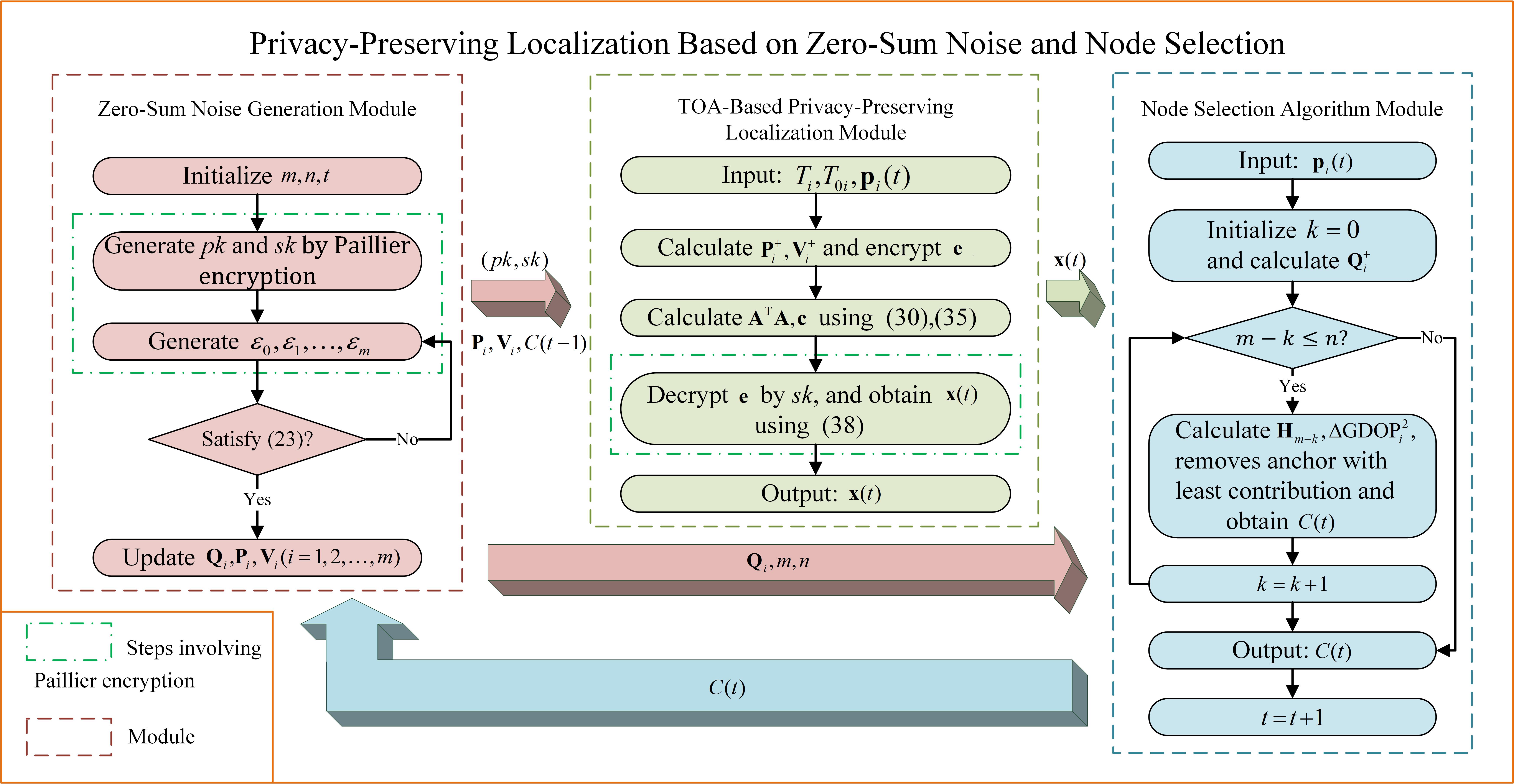}%
% \label{fig_fourth_case}}
\includegraphics[width=7in]{framework.jpg}
\caption{Algorithm framework of PPLZN. The framework comprises three modules: zero-sum noise generation module (left) to protect privacy, ToA-based privacy-preserving localization module (middle) to estimate the target's position, and node selection algorithm module (right) to improve computation efficiency. The zero-sum noise generation module encrypts location-related information, then the localization module decrypts the ciphertext and provides estimated positions to the node selection algorithm module, and finally the optimal anchors are obtained for the next iteration.}
\label{fig_frame}
\end{figure*}

\subsection{Zero-Sum Noise Generation Based on Paillier Encryption}
The first step of PPLZN is the generation of a zero-sum noise set. This injected noise simultaneously protects sensitive location data and preserves localization accuracy, as the noise terms cancel out during position computation \cite{1021}. Based on the scenario model described in Section II, the model can be formally expressed as

\begin{equation}
\sum_{i=0}^m \varepsilon_i = 0, \label{eq1023}
\end{equation}
or
\begin{equation}
\varepsilon_0 = -\sum_{i=1}^m \varepsilon_i, \label{eq1024}
\end{equation}
where $\varepsilon_{0}$ and $\varepsilon_{i}(i=1,2,\cdots, m)$ denote the random noise generated by the target $\mathbb T$ and the anchor $\mathbb A_{i}$, respectively.

Specifically, the anchor $\mathbb A_{i}$ generates a random number $\varepsilon_{i}$ locally and transmits it to the target $\mathbb{T}$. After collecting the random numbers $\varepsilon_{1},\varepsilon_{2},\ldots,\varepsilon_{m}$ from all the anchors, the target calculates $\varepsilon_{0}$ using \myref{eq1024}. The $\varepsilon_{0},\varepsilon_{1},\varepsilon_{2},\cdots,\varepsilon_{m}$ make up a set of zero-sum noise. However, directly transmitting these noise values may expose anchor data to the target. To reduce and avoid this risk, we design a method for zero-sum noise generation based on Paillier encryption.

The Paillier encryption-based zero-sum noise generation process is illustrated in Fig. \ref{fig2}. First, the target generates a public-private key pair ${(pk, sk)}$ using the Paillier encryption, where $pk$ is the public key, and $sk$ is the corresponding private key. After generating the keys, the target distributes the public key to all anchors. Each anchor $\mathbb A_{i}$ encrypts its locally generated random number $\varepsilon_{i}$ with $pk$ and transmits the resulting ciphertext to the aggregator $\mathbb{G}$. The aggregator collects all the ciphertexts from the anchors and computes $\sum\varepsilon_{i}$ using the Paillier homomorphic addition property \myref{eq13}. The sum is then sent to the target. Importantly, ciphertexts must be transmitted to the aggregator rather than directly to the target; otherwise, the target could decrypt the noise of each anchor individually, leaking sensitive information in subsequent noise injection steps. Finally, the target decrypts it using $sk$ to obtain the noise sum and derive the local zero-sum noise component by \myref{eq1024}. So far, a complete set of zero-sum noise components $\varepsilon_{0},\varepsilon_{1},\varepsilon_{2},\ldots,\varepsilon_{m}$ has been generated.

Based on a single set of zero-sum noise, multiple sets are combined to form a zero-sum noise matrix, denoted as $\mathbf P_{i}(i=0,1,\ldots m)$, satisfying 
\begin{equation}
\sum_{i=0}^{m}\mathbf P_{i}=\mathbf 0.  \label{eq1127}
\end{equation}

\begin{figure}
\centering
\includegraphics[width=3in]{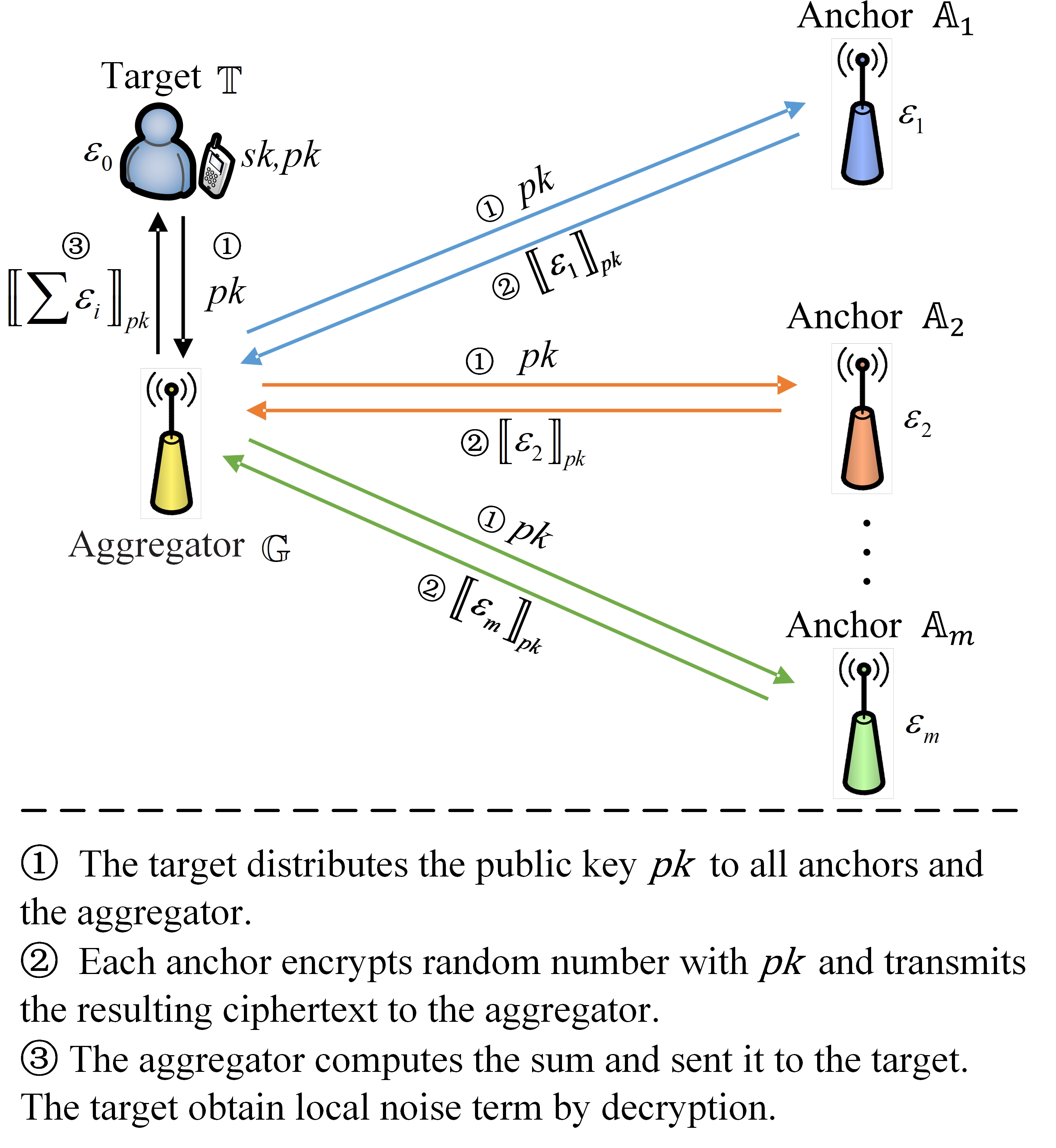}
\caption{Zero-sum noise generation based on Paillier encryption.}
\label{fig2}
\end{figure}

\subsection{Node Selection Algorithm Based on Contribution Degree of GDOP}
Multiliterate positioning accuracy exhibits a positive correlation with the number of participating anchors, asymptotically approaching theoretical limits under constant measurement noise \cite{CRLB_TOA}. However, computational and communication overhead increases explosively with anchor count. To combine efficiency with positioning precision in dense anchor environments, this section presents a privacy-preserving implementation of the NSA inspired by the reverse star selection algorithm \cite{28}. 

Based on \myref{eq24}, the privacy matrix $\mathbf{G}_m$ and $h_i$ must be protected. Since $\mathbf G_{m}$ is made up of $h_{i}(i=1,2,\ldots, m)$, only $h_{i}$ needs to be protected. The privacy-preserving methodology proceeds as follow:

\begin{enumerate}
\item{Initialize the optimal anchor combination, which contains the identities of all anchors. $\mathbf Q_{i}$, $i=1,2,\ldots, m$ is a set of zero-sum matrices generated by the ZSNG, satisfying
\begin{equation}
\sum_{i=0}^{m}\mathbf Q_{i} = \mathbf 0 . \label{eq1128}
\end{equation}
where $\mathbf Q_{i}$ has the same size as its corresponding $h_{i}$.}

\item{The aggregator broadcasts the GDOP calculation request. The anchor $\mathbb A_{i}$ sends $\mathbf Q_{i}^{+}$, the target $\mathbb  T$ sends $\mathbf Q_{0}^{+}$, and the other anchor $\mathbb A_{j}(j\neq i)$ sends $\mathbf Q_{j}$, where 
\begin{equation}
\begin{cases}
  \mathbf Q_{i}^{+} = \mathbf Q_{i}-\mathbf p_{i} & i \ne0\\
  \mathbf Q_{0}^{+} = \mathbf Q_{0}+ \widehat{\mathbf{p}}_{0} & i=0 
    \label{eq1132}.
\end{cases}
\end{equation}

The aggregator collects all return signals and calculates $\mathbf H_{m}$ from \myref{eq25} and \myref{eq26}}. $\Delta \text{GDOP}_{i}^{2}$ is calculated by \myref{eq24}. 

\item{The aggregator removes the identity of the anchors with the least contribution from the optimal anchor combination and builds a new observation matrix $\mathbf H_{m-k}$ until the number of anchors in the optimal anchor combination is equal to $n$.}
\end{enumerate}

\begin{equation}
\textbf{H}_m = \left[ \begin{array}{c} h_1 \\ \vdots \\ h_m \end{array} \right], \label{eq25}
\end{equation}

\begin{align}
\hat{h}_i &= \frac{\mathbf Q_i^{+} +\mathbf Q_0^{+} + \sum_{j=1, j\neq i}^{m} \mathbf Q_j}{\left| \mathbf Q_i^{+} + \mathbf Q_0^{+} + \sum_{j=1, j\neq i}^{m} \mathbf Q_j \right|} \nonumber \\
    &= \frac{\hat{p}_0 - p_i}{\left| \hat{p}_0 - p_i \right|} = \left[ \frac{\hat{x}_0 - x_i}{d_i}, \frac{\hat{y}_0 - y_i}{d_i}, \frac{\hat{z}_0 - z_i}{d_i} \right]^{T}. \label{eq26}
\end{align}

According to the above, we select $n$ anchors from the set of $m$ anchors at the moment of $t$ (the NSA is not executed when $m \le n$). These $n$ anchors are considered to be the optimal anchor combination, denoted $C(t)$. The NSA workflow executed by the aggregator is shown in Algorithm \ref{alg2}.

\begin{algorithm}
\renewcommand{\algorithmicrequire}{\textbf{Input:}} 
\renewcommand{\algorithmicensure}{\textbf{Output:}}
\pagestyle{empty}
\caption{Node Selection Algorithm (NSA)}
\begin{algorithmic}[1]
\REQUIRE anchor positions $\mathbf{p}_i(t)$'s; public key $(n,g)$; target estimated position at the moment of $t$ $\mathbf{\hat p}_0(t)$
\ENSURE the optimal anchor combination at the moment of $t$ $C(t)$
\STATE Initialize $C(t)$ and generate $\mathbf{Q}_i$
\FOR{each anchor $i$}
    \STATE Construct $\mathbf{Q}_i^{+}$ and send it to aggregator
\ENDFOR
\STATE Target constructs $\mathbf{Q}_0$ and sends it to aggregator
\WHILE{$m-k \leq n$}
    \STATE Aggregator calculates $\mathbf{H}_{m-k}$,$\Delta \mathrm{GDOP}_i^2$ and removes anchor with least contribution from $C(t)$
    \STATE $k \leftarrow k + 1$ (initial $k=0$)
\ENDWHILE
\STATE Target obtains $C(t)$ from aggregator
\STATE $t \leftarrow t + 1$
\end{algorithmic} \label{alg2}
\end{algorithm}

\subsection{ToA-Based Privacy-Preserving Localization}
We now consider the location estimation formula in \myref{eq7}, i.e. $\mathbf{x}=\left(\mathbf{A}^\text{T}\mathbf A\right)^{-1}\mathbf A^\text{T}\mathbf b$, where $\mathbf A$ is defined by the anchor coordinates given in \myref{eq5}. Thus, computing the target’s position requires both the coordinates of each anchor and the measured ranges between the target and the anchors. If relevant information is sent without any privacy-preserving measures, the anchor location will be exposed to the target, which may violate the privacy goal.

In general, we perform multiple decompositions of the position estimation and use different encryption methods for each decomposition term, as shown in Fig. \ref{fig_decom}. The principles of decomposition and the corresponding encryption approaches are analyzed in detail below.
\begin{figure}
\centering
\includegraphics[width=3in]{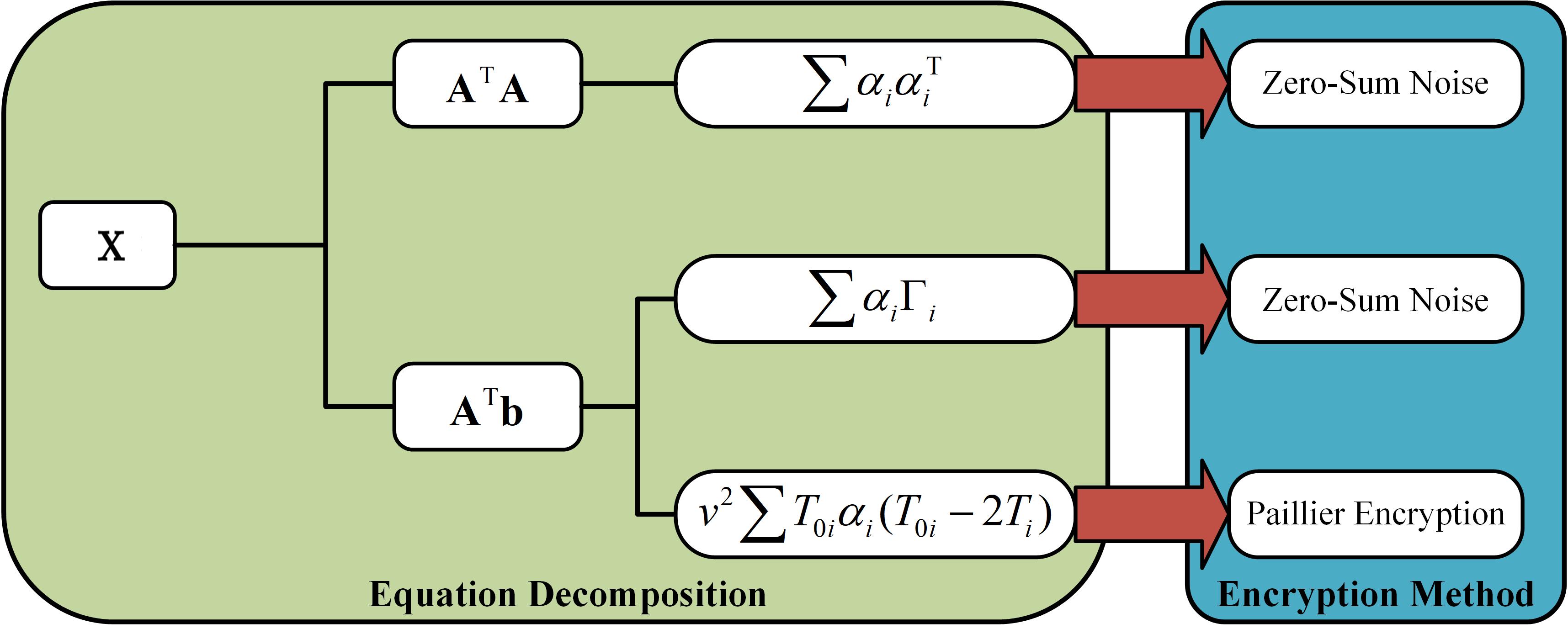}
\caption{Equation decomposition and encryption method. $\mathbf x$ is decomposed into two matrices based on (7) and further transformed into three summation expressions, each protected with a suitable encryption method.}
\label{fig_decom}
\end{figure}

To meet the privacy requirement, we decompose $\mathbf{x}$ into two steps, $\mathbf A^\text{T}\mathbf A$ and $\mathbf A^\text{T}\mathbf b$. According to \myref{eq4} and \myref{eq5}, let $\alpha_{i}=\left[-2 x_{i}, -2 y_{i}, -2 z_{i}, 1\right]^\text{T}$, $b_{i}=d_{i}^{2}-R_{i}$, then $\mathbf A$ can be written as

\begin{equation}
\mathbf A = \left[ \begin{array}{c} \alpha_1^\text{T} \\ \alpha_2^\text{T} \\ \vdots \\ \alpha_m^\text{T} \end{array} \right]. \label{eq27}
\end{equation}
Furthermore,
\begin{align}
\mathbf A^\text{T}\mathbf A &= \sum_{i=1}^m \alpha_i \alpha_i^\text{T}, \label{eq28} \\
\mathbf A^\text{T}\mathbf b &= \sum_{i=1}^m \alpha_i b_i. \label{eq29}
\end{align}

The anchor $\mathbb A_{i}$ holds its private parameters $\alpha_{i}$ and $b_{i}$ in \myref{eq28} and \myref{eq29}. Accordingly, each anchor can locally compute the terms $\alpha_{i}\alpha_{i}^\text{T}$ and $\alpha_{i} b_{i}$. Since the target $\mathbb{T}$ requires only the summations $\sum\alpha_{i}\alpha_{i}^{T}$ and $\sum\alpha_{i} b_{i}$, this enables the use of zero-sum noise. 
To compute $\mathbf A^\text{T}\mathbf A$ securely, each anchor $\mathbb A_i$ generates its private matrix  $\alpha_i\alpha_i^{T}(i=1,2,\ldots, m)$,  while the target seeks to obtain the sum $\sum_{i=1}^{m}\alpha_{i}\alpha_{i}^{T}$  without revealing any individual anchor position. The procedure for securely computing $\mathbf A^\text{T}\mathbf A$ using zero-sum noise is as follows.

\begin{enumerate}
\item{Based on the ZSNG, anchor $\mathbb A_{i}$ derives a zero-sum noise matrix $\mathbf P_{i}$, satisfying \myref{eq1127}, where $\mathbf P_{i}$ has the same size as its corresponding $\alpha_{i}\alpha_{i}^{T}$.}

\item{The anchor computes the noise-adding information $\mathbf P_{i}^{+}$ and transmits it to the target, where 
\begin{equation}
\mathbf P_{i}^{+} = \mathbf P_{i} + \alpha_{i}\alpha_{i}^{T}. \label{eq1131}
\end{equation}
}

\item{Utilizing its own locally generated matrix $\mathbf P_{0}$, the target aggregates all received matrices and computes the global summation using \myref{eq30}.}
\end{enumerate}

\begin{align}
\mathbf A^\text{T}\mathbf A &= \sum_{i=1}^{m} \mathbf P_i^{+} + \mathbf P_0 = \sum_{j=1}^{m} (\alpha_j\alpha_j^{T} + \mathbf P_j) + \mathbf P_0 \nonumber\\ &= \sum_{j=1}^{m} \alpha_j\alpha_j^{T} + \sum_{j=0}^{m} \mathbf P_j = \sum_{j=1}^{m} \alpha_j\alpha_j^{T}. \label{eq30}
\end{align}

To protect $\mathbf A^\text{T}\mathbf b$, note that each $b_{i}$ contains two timestamps, $T_{i}$ and $T_{0 i}$ , from the anchors and the target, respectively. Directly sharing these timestamps would reveal the distance between the anchor and the target. If anchors gain access to the target's timestamps, malicious anchors could estimate the target's position through multilateration. Conversely, if different targets measure distances to the same anchor, they could infer the anchor's position. Therefore, access to raw timestamp data must be strictly prohibited for all participating entities to preserve location confidentiality in our system. Based on \myref{eq4}, $\mathbf b$ is expressed as:

\begin{equation}
\mathbf b = \left[ \begin{array}{c} b_{1} \\ b_{2} \\ \vdots \\ b_{m} \end{array} \right] = \left[ \begin{array}{c} v^{2} T_{01}^{2} - 2 v^{2} T_{1} T_{01} + \Gamma_{1} \\ v^{2} T_{02}^{2} - 2 v^{2} T_{2} T_{02} + \Gamma_{2} \\ \vdots \\ v^{2} T_{0 m}^{2} - 2 v^{2} T_{m} T_{0 m} + \Gamma_{m} \end{array} \right], \label{eq31}
\end{equation}
\begin{align}
b_i &= v^2 T_{0 i}^2 - 2 v^2 T_i T_{0 i} + \Gamma_i, \label{eq32} \\
\Gamma_i &= v^2 T_i^2 - (x_i^2 + y_i^2 + z_i^2). \label{eq33}
\end{align}
Note that the sending and receiving times of transmitted signals are kept by the target and the anchor side separately, which is confidential from each other. By focusing on $\mathbf A^\text{T}\mathbf b$, since  $\alpha_{i}, T_{i}$ and $\Gamma_{i}$ belong to the anchor, while $T_{0 i}$ is held by the target, we divide $\mathbf A^\text{T}\mathbf b$ into two parts as
\begin{align}
\mathbf A^\text{T}\mathbf b &= \sum_{i=1}^m \alpha_i (v^2 T_{0 i}^2 - 2 v^2 T_i T_{0 i} + \Gamma_i)\nonumber \\
&= \sum_{i=1}^m \alpha_i \Gamma_i + v^2 \sum_{i=1}^m T_{0 i} \alpha_i (T_{0 i} - 2 T_i).\label{eq34}
\end{align}

Let $\mathbf c = \sum_{i=1}^{m}\alpha_{i}\Gamma_{i}$. The privacy-preserving computation for $\mathbf c$ can be achieved using zero-sum noise. Suppose $\mathbf V_{i}(i=1,2,\ldots, m)$ is a set of random matrices generated by the ZSNG that satisfies
\begin{equation}
\sum_{i=0}^{m} \mathbf V_{i}=\mathbf{0}, \label{eq1129}
\end{equation}
where $\mathbf V_{i}$ has the same size as its corresponding $\alpha_{i} b_{i}$. Thus, the noise-added matrix can be defined as
\begin{equation}
\mathbf V_{i}^{+} = \alpha_{i}\Gamma_{i} + \mathbf V_{i}. \label{eq1130}
\end{equation}
And $\mathbf c$ can be calculated by \myref{eq35} in a way similar to the calculation of $\mathbf A^\text{T}\mathbf A$.
\begin{equation}
\mathbf c = \sum_{i=1}^m V_i^{+} + V_0 =  \sum_{i=1}^m \alpha_i \Gamma_i . \label{eq35}
\end{equation}

For $v^2 \sum_{i=1}^m T_{0 i} \alpha_i (T_{0 i} - 2 T_i)$ in \myref{eq34}, as $v^2$ is a public known quantity, it does not require encryption; only $\sum_{i=1}^m T_{0 i} \alpha_i (T_{0 i} - 2 T_i)$ needs to be considered. The target encrypts $T_{0 i}$ with a public key (this ciphertext is denoted as $t_{0 i}$) and the anchor encrypts $-2T_{i}$ with a public key (this ciphertext is denoted as $t_{i i}$). After the target sends it to anchor $\mathbb A_{i}$, the anchor calculates $\chi_{i}$ using the homomorphic property by \myref{eq36}.
\begin{equation}
\chi_i = \alpha_i \otimes (t_{0 i} \oplus t_{i i}). \label{eq36}
\end{equation}
 Until the aggregator $\mathbb{G}$ receives $\chi_{i}(i=1,2,\ldots, m)$ from all
anchors and $T_{0 i}(i=1,2,\ldots, m)$ from the target, it does not
compute $\mathbf{e}$ by \myref{eq37}.
\begin{equation}
\mathbf e = (T_{01} \otimes \chi_1) \oplus (T_{02} \otimes \chi_2) \oplus \cdots \oplus (T_{0 m} \otimes \chi_m). \label{eq37}
\end{equation}

In conclusion, the target receives a single values $\mathbf{e}$ from the aggregatorand subsequently  estimates its position using \myref{eq38}.
\begin{equation}
\mathbf x = (\mathbf A^\text{T}\mathbf A)^{-1} (\mathbf c + v^{2} \cdot {Decr}(\mathbf e)). \label{eq38}
\end{equation}

\begin{proposition} \label{prop1}
The proposed PPLZN computational procedure ensures that the locations of both the target and the anchors cannot be inferred by any other party.
\end{proposition}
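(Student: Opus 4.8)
The plan is to establish the claim in the honest-but-curious model by examining, separately for each class of participant, the complete transcript of messages that party receives during the three phases of PPLZN, and showing that every quantity in that transcript falls into one of two categories: (i) a Paillier ciphertext, which by the IND-CPA (semantic) security of the cryptosystem leaks nothing about its underlying plaintext; or (ii) an additively masked plaintext of the form ``secret $+$ noise,'' where the noise is an individual component of a zero-sum set that the receiving party does not possess. For category (ii) I would invoke the one-time-pad / additive-secret-sharing property: if each noise component is drawn from a distribution of full support over the message ring, then conditioned on everything the receiver actually knows, the masked value is distributed independently of the underlying secret, so the secret is information-theoretically hidden.

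Concretely, I would carry out three cases. \emph{Anchors against the target:} the target receives $\mathbf{P}_i^{+}=\mathbf{P}_i+\alpha_i\alpha_i^{\mathrm T}$ and $\mathbf{V}_i^{+}=\alpha_i\Gamma_i+\mathbf{V}_i$ from each anchor; because the ZSNG delivers to $\mathbb{T}$ only the aggregate $\mathbf{P}_0$ (equivalently $\sum_{i=1}^m\mathbf{P}_i=-\mathbf{P}_0$) and never the individual $\mathbf{P}_i$, each mask is fresh and unknown, so $\mathbb{T}$ can recover only the sums $\mathbf{A}^{\mathrm T}\mathbf{A}=\sum_i\alpha_i\alpha_i^{\mathrm T}$ and $\mathbf{c}=\sum_i\alpha_i\Gamma_i$ of \myref{eq30} and \myref{eq35}, and no individual $\alpha_i$, hence no $\mathbf{p}_i$. \emph{Target against the anchors:} the only target-derived quantity an anchor sees is the ciphertext $t_{0i}$ of the timestamp $T_{0i}$; by IND-CPA it reveals nothing about $T_{0i}$, and since a range $d_i$ requires both $T_{0i}$ and the locally held $T_i$, no anchor can reconstruct a distance-to-target and thus cannot locate $\mathbb{T}$. \emph{Against the aggregator:} in the ZSNG $\mathbb{G}$ sees only the ciphertexts $\llbracket\varepsilon_i\rrbracket_{pk}$ and their homomorphic sum, all protected by IND-CPA; in the localization phase it sees the ciphertexts $\chi_i$ of \myref{eq36} together with the cleartext $T_{0i}$, but $T_{0i}$ in isolation carries no distance information while $\chi_i$ is encrypted; and in the NSA it sees only the zero-sum-masked terms $\mathbf{Q}_i^{+},\mathbf{Q}_0^{+},\mathbf{Q}_j$, from which the cancellation in \myref{eq26} yields only the differences $\widehat{\mathbf{p}}_0-\mathbf{p}_i$ required for GDOP and never an absolute coordinate, since the common mask $\mathbf{Q}_i$ hides each individual position.

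The final step is to compose these per-party arguments into the single stated guarantee. Because a freshly and independently sampled zero-sum set is used for each decomposition term ($\mathbf{P}_i$, $\mathbf{V}_i$, and the Paillier randomness underlying $\mathbf{e}$), the masks across sub-protocols are independent, so no party can cross-correlate transcripts to strip a mask; and since the only cleartext result is $\widehat{\mathbf{p}}_0$, obtained by $\mathbb{T}$ itself through the decryption in \myref{eq38}, correctness (the equivalence to the MMSE estimate of \myref{eq8}) coexists with privacy. I expect the main obstacle to be making the category-(ii) hiding claim fully rigorous: one must verify that $\mathbb{T}$ genuinely never obtains any individual $\mathbf{P}_i$ or $\mathbf{V}_i$, which is precisely the property the ZSNG is designed to enforce and which hinges on routing the per-anchor ciphertexts through $\mathbb{G}$ so that $\mathbb{T}$ receives only the homomorphic sum. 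A secondary but essential point is to state precisely the non-collusion assumption implicit in the model, since a coalition of $\mathbb{G}$ with a subset of anchors could pool their individual noise shares and thereby defeat the additive masking; I would therefore phrase the guarantee as holding against each party acting alone.
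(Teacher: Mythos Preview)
Your proposal is correct and follows essentially the same case-by-case strategy as the paper's proof: partition the participants, enumerate each party's received transcript, and argue that every item is either a Paillier ciphertext (hidden by the hardness assumption) or an additively masked value (hidden by the unknown noise share). The paper's own proof is considerably terser---it simply asserts ``obfuscated by the noise'' and ``encrypted by Paillier encryption'' for each case and cites the Decisional Composite Residuosity assumption---whereas you spell out the IND-CPA and one-time-pad style justifications, trace what the aggregator sees in each of the three modules (ZSNG, localization, NSA), and make the non-collusion caveat explicit; all of this is sound and strictly strengthens the paper's sketch. The one small omission is that you do not separately state the anchor-to-anchor case, which the paper dispatches in one line by observing that anchors exchange no messages with one another; you should add that sentence for completeness.
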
 
\begin{proof} 
For anchor-to-anchor communication, no location information is disclosed since no data is exchanged between anchors.

For target-to-anchor, take anchor $\mathbb A_{i}$ as an example without loss of generality. Anchor $\mathbb A_{i}$ sends $\mathbf P_{i}^{+}$ and $\mathbf V_{i}^{+}$ to the target, both obfuscated by the noise. Moreover, $\mathbb A_{i}$ only receives $t_{0 i}$ from the target, which is encrypted by Paillier encryption. Therefore, anchor $\mathbb A_{i}$ can disclose nothing about target’s location information and vice versa.

For node-to-third party, the aggregator receives $\mathbf Q_{i}^{+},\chi_i$ from each anchor and the target. $\mathbf Q_{i}^{+}$ is obfuscated by the noise and $\chi_i$ is encrypted by Paillier encryption. However, position estimation is performed on the target-side. The estimated position is decrypted by the target, leveraging the Paillier encryption scheme. The Paillier encryption scheme relies on the Decisional Composite Residuosity assumption \cite{36}, which posits that determining residuosity classes modulo a composite number is computationally infeasible. So, the aggregator cannot know the location information of any anchor or the target during the localization process.
\end{proof}
\begin{proposition} \label{prop2}
The estimated position result with the proposed zero-sum noise and Paillier encryption strategies is consistent with that without encryption.
\end{proposition}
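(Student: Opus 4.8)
The plan is to show that every noisy aggregate and every ciphertext the target combines in \myref{eq38} resolves to exactly the plaintext quantity it is meant to represent, so that the recovered $\mathbf x$ coincides term-by-term with the MMSE estimator $(\mathbf A^\text{T}\mathbf A)^{-1}\mathbf A^\text{T}\mathbf b$ of \myref{eq7}. I would first dispatch the two zero-sum aggregates. Summing the noise-added matrices $\mathbf P_i^{+}=\mathbf P_i+\alpha_i\alpha_i^\text{T}$ over all anchors and adding the target's $\mathbf P_0$ produces $\sum_{j=1}^m\alpha_j\alpha_j^\text{T}+\sum_{j=0}^m\mathbf P_j$; the zero-sum constraint \myref{eq1127} annihilates the noise sum, leaving exactly $\mathbf A^\text{T}\mathbf A=\sum_{i=1}^m\alpha_i\alpha_i^\text{T}$, which is \myref{eq30}. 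The identical cancellation applied to $\mathbf V_i^{+}=\alpha_i\Gamma_i+\mathbf V_i$ under \myref{eq1129} yields $\mathbf c=\sum_{i=1}^m\alpha_i\Gamma_i$, which is \myref{eq35}. Both are pure algebraic cancellations and invoke no encryption.

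Next I would compute the decryption of the Paillier-encrypted cross term. Writing $t_{0i}=\llbracket T_{0i}\rrbracket_{pk}$ and $t_{ii}=\llbracket -2T_i\rrbracket_{pk}$, the homomorphic-addition property \myref{eq13} gives $t_{0i}\oplus t_{ii}=\llbracket T_{0i}-2T_i\rrbracket_{pk}$. Applying the scalar-multiplication property \myref{eq14} componentwise then gives $\chi_i=\alpha_i\otimes(t_{0i}\oplus t_{ii})=\llbracket \alpha_i(T_{0i}-2T_i)\rrbracket_{pk}$ as a vector of ciphertexts. A second scalar multiplication by $T_{0i}$ followed by the homomorphic sum over $i$, as in \myref{eq37}, produces $\mathbf e=\llbracket\sum_{i=1}^m T_{0i}\,\alpha_i(T_{0i}-2T_i)\rrbracket_{pk}$, so by decryption correctness \myref{eq12} one obtains $Decr(\mathbf e)=\sum_{i=1}^m T_{0i}\,\alpha_i(T_{0i}-2T_i)$.

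To assemble, I would substitute the three results into \myref{eq38}: $\mathbf c+v^2\,Decr(\mathbf e)=\sum_{i=1}^m\alpha_i\Gamma_i+v^2\sum_{i=1}^m T_{0i}\,\alpha_i(T_{0i}-2T_i)$, which is precisely the decomposition of $\mathbf A^\text{T}\mathbf b$ in \myref{eq34}. Combined with $\mathbf A^\text{T}\mathbf A$ from the first step, \myref{eq38} collapses to $(\mathbf A^\text{T}\mathbf A)^{-1}\mathbf A^\text{T}\mathbf b$, matching \myref{eq7}–\myref{eq8} exactly, which establishes the claimed consistency.

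The main obstacle is not the algebra---which is routine cancellation plus homomorphism bookkeeping---but justifying that the homomorphic evaluation is \emph{exact} over the real-valued, signed localization data. The Paillier operations live in $\mathbb Z_n$, whereas $T_{0i}$, the entries of $\alpha_i$, and their products are real and possibly negative, and \myref{eq14} is stated only for $k\in\mathbb N$. I would therefore make explicit the fixed-point encoding that maps these quantities into $\mathbb Z_n$, with negatives represented in the upper half of the ring, and argue that for a modulus $n$ large enough that no intermediate plaintext magnitude reaches the wrap-around threshold, every $\oplus$ and $\otimes$ commutes with decoding. Under this no-overflow condition the decrypted cross term equals its plaintext counterpart, and the consistency with the unencrypted estimator is exact.
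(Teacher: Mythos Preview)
Your proposal is correct and follows essentially the same approach as the paper: the proof deferred to Appendix~A can only be the explicit assembly of the in-text derivations \myref{eq30}, \myref{eq35}, \myref{eq34}, and \myref{eq38}, which are precisely the zero-sum cancellations and homomorphic identities you spell out. Your closing paragraph on the fixed-point encoding of signed reals into $\mathbb Z_n$ and the no-overflow condition is a valid and careful caveat that the paper tacitly assumes.
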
 
\begin{proof} 
Refer to Appendix A%~\ref{app:A}.
\end{proof}
Finally, the process of privacy-preserving localization based on ToA is summarized in Algorithm \ref{alg1}.

\begin{algorithm}
\renewcommand{\algorithmicrequire}{\textbf{Input:}} 
\renewcommand{\algorithmicensure}{\textbf{Output:}}
\pagestyle{empty}
\caption{ToA-Based Privacy-Preserving Localization}
\begin{algorithmic}[1]
\REQUIRE Anchor positions $\mathbf{p}_i(t)$'s; transmitted timestamps $T_i(t)$'s; received timestamps $T_{0i}(t)$'s; public key $(n,g)$; optimal combination at the moment of $t-1$ $C(t-1)$
\ENSURE Target position at the moment of $t$ $\mathbf{p}_0(t)$
\STATE Generate $\mathbf{P}_i, \mathbf{V}_i$ by (27) and (41)
\STATE Target encrypt $t_{0 i}$ and send $T_{0i}$ to aggregator
\FOR{each anchor $i$ in $C(t-1)$}
    \STATE Construct $\mathbf{P}_i^{+}, \mathbf{V}_i^{+}$ by (35) and (42)
    \STATE Encrypt $t_{i i}$ using public key and compute $\chi_i$ by (44)
    \STATE Send $\mathbf{P}_i^{+}, \mathbf{V}_i^{+}$ to target and $\chi_i$ to aggregator
\ENDFOR
\STATE Aggregator computes $\mathbf e$ by (45) and sends it to target
\STATE Target computes $\mathbf A^\text{T}\mathbf A$, $\mathbf c$, decrypts $\mathbf e$ and computes $\mathbf{x}(t)$ by (46)
\STATE $t \leftarrow t + 1$
\end{algorithmic} \label{alg1}
\end{algorithm}

\subsection{Algorithm Framework}
In Fig. \ref{fig_frame}, the system operates through an integrated workflow involving three modules.
At the moment of $t$, the process begins with the Zero-Sum Noise Generation module, which produces multiple sets of zero-sum noise values $\varepsilon_0,\varepsilon_1, \ldots \varepsilon_m$ by a public-private key pair $(pk, sk)$. These noise values are used to update the noise matrices $\mathbf P_i, \mathbf Q_i$ and $\mathbf V_i$ by \myref{eq1127}, \myref{eq1128} and \myref{eq1129}. The module then outputs $(pk, sk)$, $\mathbf Q_i,\mathbf  P_i$, and the optimal anchor combination $C(t-1)$ to the Privacy-Preserving Localization module, while also transmitting $\mathbf Q_i, m$ and $n$ to the NSA module.
The Privacy-Preserving Localization module estimates the target's position based on the optimal anchors in $C(t-1)$. It computes the noise-added private matrices $\mathbf P_i^+$ and $\mathbf V_i^+$ and encrypts the vector $\mathbf e$. After decrypting the vector, the module estimates the target's location $\mathbf x(t)$, which is then sent to the NSA module. 
The NSA module accepts the position of anchors $\mathbf p_i(t)$ and $\mathbf x(t)$ as inputs and computes the GDOP contribution of each anchor $\Delta \text{GDOP}_{i}^{2}$ by \myref{eq23}. Then iteratively removes the anchor with the lowest contribution, recalculates the contributions, and repeats the process until the number of anchors is reduced from $m$ to $n$. The updated optimal set of anchors is returned to the Zero-Sum Noise Generation module to guide subsequent positioning rounds. Thus, the system achieves closed-loop operation on the basis of the above description. Finally, a schematic diagram of the information exchange between each entity for the PPLZN scheme is shown in Fig. \ref{fig_flowchart}.

\begin{figure}
\centering
\includegraphics[width=3in]{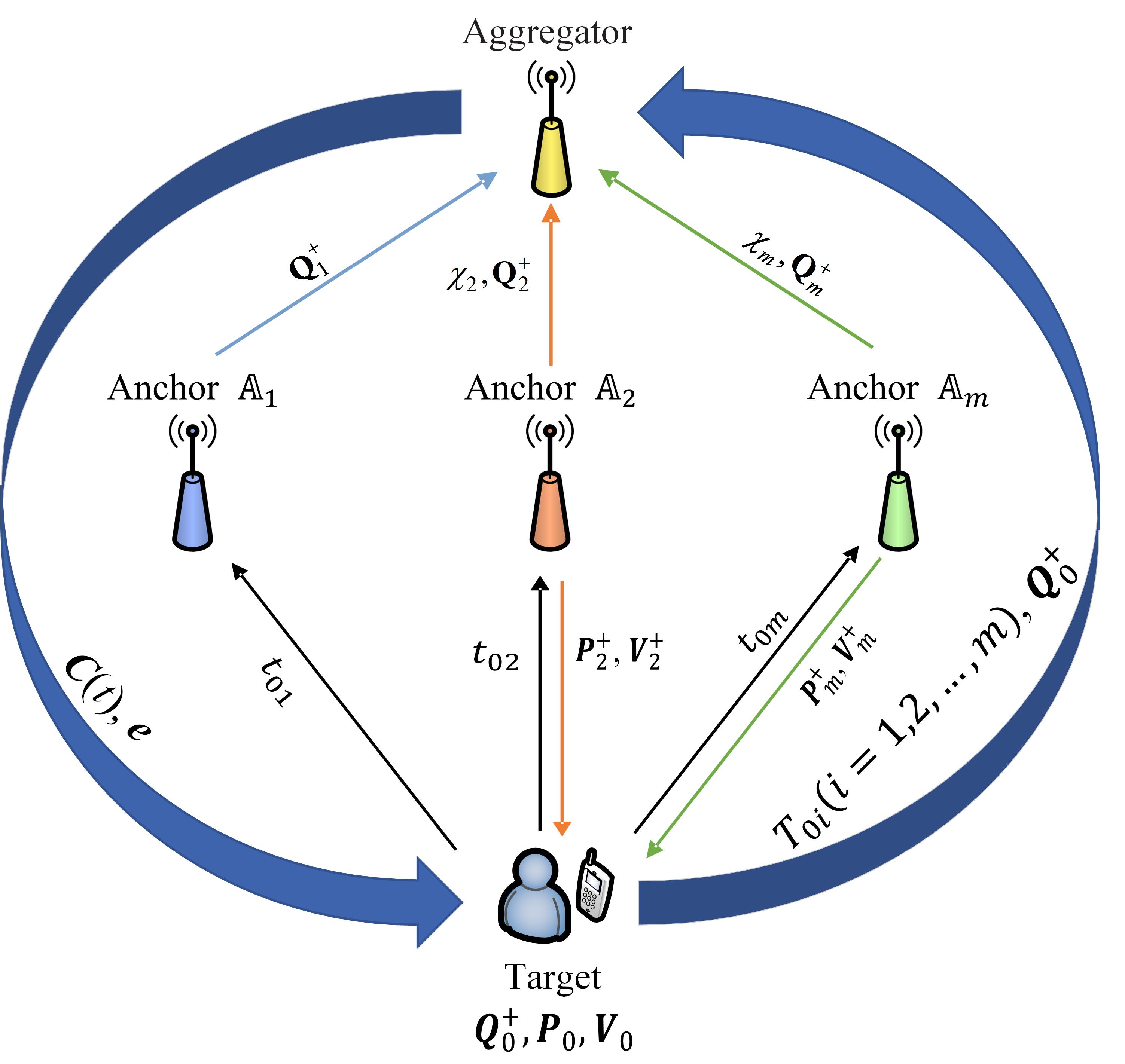}
\caption{Flowchart to the proposed privacy-preserving localization algorithm, where anchor 1 is removed from the node selection algorithm.}
\label{fig_flowchart}
\end{figure}

\subsection{Computation Complexity Analysis}
The computational complexity of the PPLZN scheme mainly comes from Paillier homomorphic encryption operations and matrix operations. Assume that the spatial dimension of the target be $d$,  the number of anchor nodes be $m$, and the key length of Paillier be $k$ bits. The time complexity of a single Paillier encryption or decryption operation is $\mathcal{O}(k^3)$. In ZSNG, each anchor node needs to perform Paillier encryption once to generate local noise. The total encryption overhead brought by $m$ anchor nodes is $\mathcal{O}(mk^3)$. After the aggregator collects all the ciphertexts, it performs homomorphic addition attribute calculations, requiring $m-1$ homomorphic additions, with a cost of $\mathcal{O}((m-1)k^2)$. After the target receives the aggregated value, it only needs to perform Paillier decryption once. This module has a complexity of $\mathcal{O}(k^3)$. The total complexity of the ZSNG module is $\mathcal{O}(mk^3)$. NSA calculates the GDOP contribution of each anchor node, which requires the inversion of a $d×d$ matrix each time. The complexity of the NSA module is $\mathcal{O}(md^3)$. In the ToA-based localization, This module involves solving linear least squares problems, where the complexity of matrix multiplication is $\mathcal{O}(md^2)$ and the complexity of matrix inversion is $\mathcal{O}(1)$. This module has a complexity of $\mathcal{O}(md^2)$. Combining the above, the total complexity is $\mathcal{O}(max \{m  k^3,md^3\})$. 

\section{PERFORMANCE EVALUTION}
\subsection{Simulation Setup}

In the 3-dimensional simulations, we employ a sensing field of 1000 m $\times$ 1000 m $\times$ 100 m with coordinates aligned to the Cartesian system (X, Y, Z axes corresponding to the dimensions of 1000 m, 1000 m and 100 m, respectively). 50 targets are randomly deployed in the field. The number of anchors varies from 6 to 30 to assess computation time and communication overhead. All nodes, including anchors and targets, can move or remain stationary. In addition, realistic positioning conditions are simulated by introducing zero-mean Gaussian noise into ToA measurements, the standard deviation of which is 6.1 ns \cite{35}. All experiments were performed on an Inter Xeon Silver 4210R platform, with critical system parameters listed in Table \ref{tableI}. A typical simulation scenario consisting of 6 anchors can be shown in Fig. \ref{fig_typical}, in which the target and the six anchor points are randomly distributed. Each anchor moves at a certain speed in a randomly chosen direction (assuming that no collisions occur among them) and the target remains still.

\begin{table}[!t]
\renewcommand{\arraystretch}{1.1}
\caption{Simulation Parameters}
\label{tableI}
\centering
\begin{tabular}{c||c}
\toprule
\textbf{Parameter} & \textbf{Value} \\
\hline
Localization field (m) & $1000 \times 1000 \times 100$ \\ \hline
Number of anchors & 6$\sim$30 \\ \hline
Number of targets & 1 \\ \hline
Simulation duration (s)& 10 \\ \hline
Standard deviation of ToA noise (ns) & 6.1 \\ \hline
Ciphertext representation (bit) & 1024 \\ \hline
Plaintext representation (bit) & 24 \\ \hline
Paillier modulus (bit) & 512 \\ \hline
Speed of nodes (m/s) & 0$\sim$10 \\
\bottomrule
\end{tabular}
\end{table}

\begin{figure}
\centering
\includegraphics[width=3in]{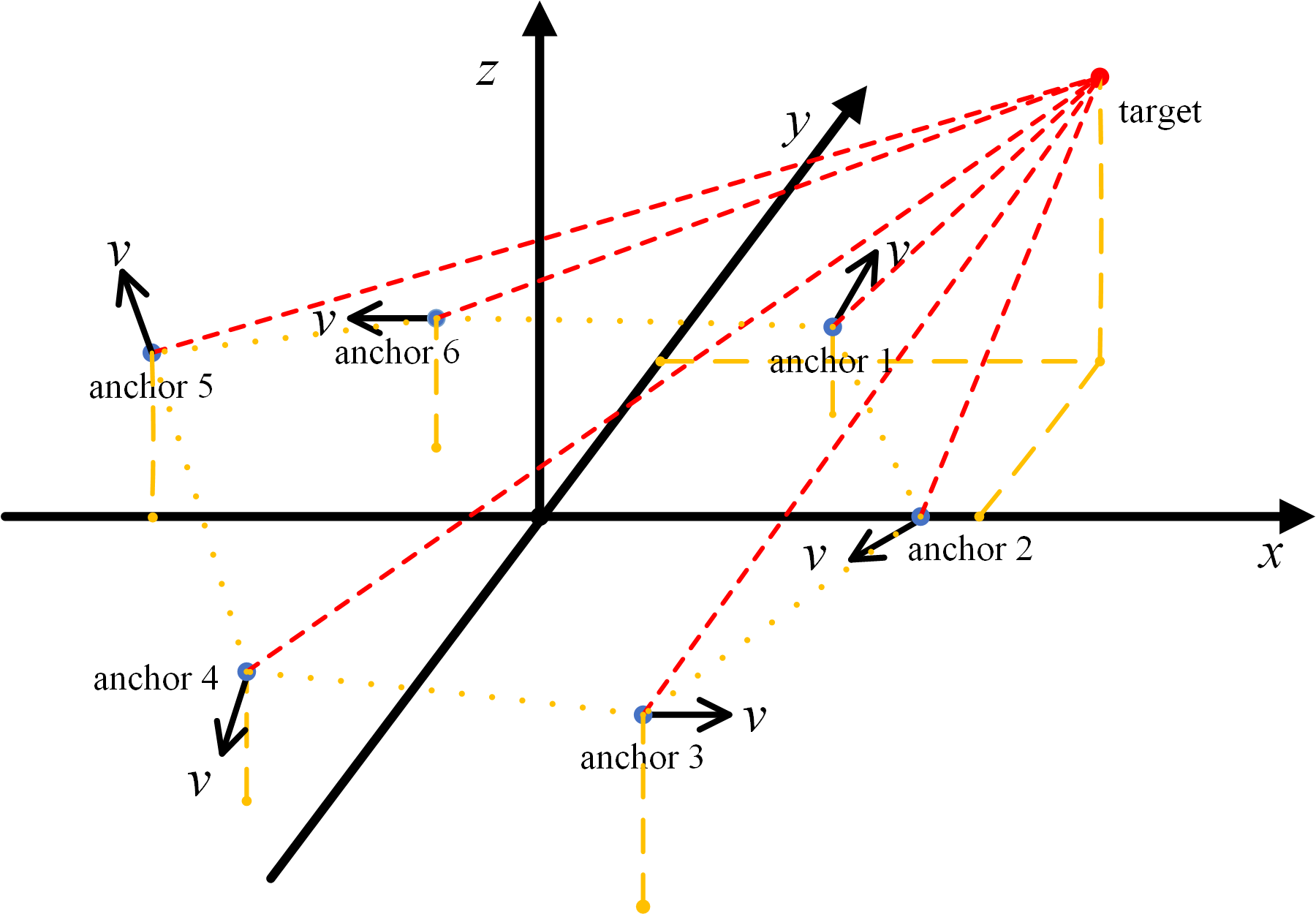}
\caption{A typical simulation scenario consisting of 6 anchors.}
\label{fig_typical}
\end{figure}

Our experiment adopts the optimized Paillier cryptosystem implementation in \cite{24}, utilizing cryptographic primitives including key generation via the \textit{paillier.generate}\_\textit{paillier}\_\textit{keypair} function, location data encryption/decryption through the \textit{public\_key.raw\_encrypt} function and \textit{private\_key.decrypt} function, sliding-window Montgomery modular exponentiation for index operations, and Toom-Cook-3 accelerated homomorphic operations (\textit{raw\_add/raw\_multiply}) \cite{36},\cite{37}. This method ensures consistency with security guarantees and computational efficiency benchmarks in all cryptographic phases.

\subsection{Algorithm Performance Under different selected anchor-number}
Before conducting comparative analysis with other schemes, we first determine the selected anchor-number $n$ to evaluate the effectiveness of NSA and examine its influence on algorithm performance. Experiments were performed using PPLZN with $n = 10, 15, 20, 25$, measuring computation time, communication overhead, and localization accuracy against the non-selective PPLZN baseline.

\textbf{Computation Overhead:} The total computation time as a function of anchor-number is shown in Fig. \ref{fig4}. Note that NSA is inactive when the number of anchors is below $n$. Thus, NSA introduces additional computational overhead beyond a certain anchor threshold, leading to a sharp increase in total computation time. However, compared to non-selective PPLZN, the computational cost of selective PPLZN increases more slowly. Moreover, a smaller $n$ results in lower computation time under large anchor counts, demonstrating that NSA effectively reduces computational overhead in such scenarios.

\begin{figure}[!t]
\centering
\includegraphics[width=3.5in]{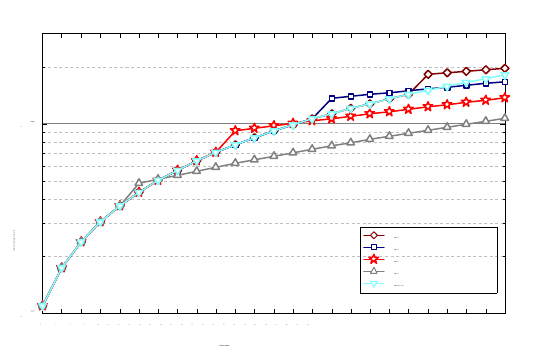}
\caption{Computation time of PPLZN under different parameter settings and the non-selective PPLZN.}
\label{chotimes} \label{fig4}
\end{figure}
\textbf{Communication Overhead:} As shown in Fig. \ref{fig5}, the communication overhead—measured in transmitted bits—is strongly influenced by the complexity of homomorphic encryption during localization. Without anchor selection, the number of ciphertext operations grows exponentially with the anchor count, leading to significantly higher communication costs. In contrast, the parameter $n$ limits the number of anchors used in localization, thereby constraining the number of ciphertext bits transmitted. Hence, the communication overhead increases only slowly with more number of anchors. Moreover, smaller values of $n$ yield better communication efficiency across different selective schemes.
\begin{figure}[!t]
\centering
\includegraphics[width=3.5in]{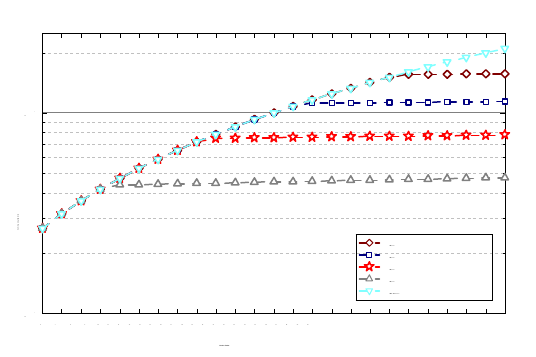}
\caption{Communication cost of PPLZN under different parameter settings and the non-selective PPLZN.}
\label{chobits} \label{fig5}
\end{figure}

\textbf{Location Accuracy:} Fig. \ref{fig6} compares the cumulative distribution functions of positioning error in a scenario with 30 anchors. Compared to raw ToA, all selective schemes exhibit marginally less accurate results, with improved precision as $n$ increases. Given the uniform observation accuracy across anchors, excluding any anchor via NSA inevitably leads to loss of positional information in the absence of prior knowledge.
\begin{figure}[!t]
\centering
\includegraphics[width=3.5in]{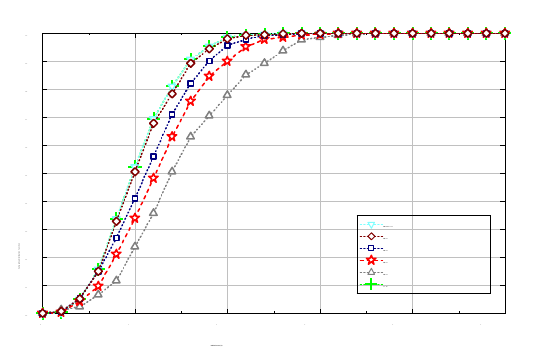}
\caption{Localization accuracy of PPLZN under different parameter settings and the non-selective PPLZN.}
\label{choerrors}  \label{fig6}
\end{figure}

Our objective is to maintain positioning accuracy within acceptable bounds while achieving high efficiency in both computation and communication. Theoretically, our scheme leverages zero-sum noise, which affords superior positioning accuracy compared to most alternative methods. This advantage allows a marginal sacrifice in localization precision in exchange for significantly improved computational and communication performance. Based on this three-way trade-off, we select $n = 15$ for subsequent comparative evaluation against baseline schemes.

\subsection{Numerical Results}
To demonstrate the advantages of our scheme, we compare it with three state-of-the-art privacy-preserving methods (EPPL \cite{39},  $\text P^3$-Pro \cite{24}, PPRP \cite{38}) as well as the conventional FHE approach \cite{40} on three key performance metrics.

\begin{figure*}[!t]
\centering
\subfloat[]{\includegraphics[width=2.4in]{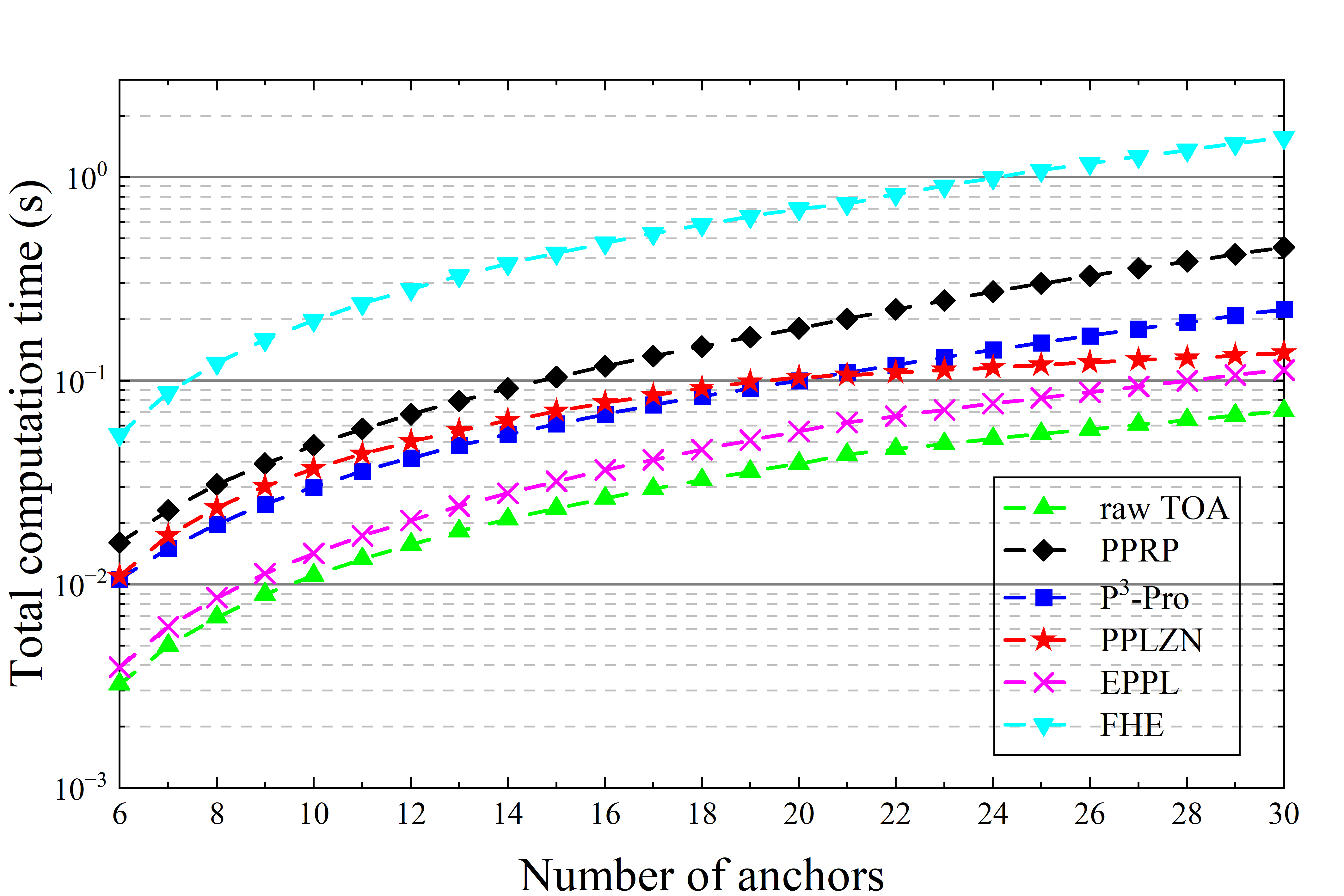}%
\label{fig_first_case}}
\subfloat[]{\includegraphics[width=2.4in]{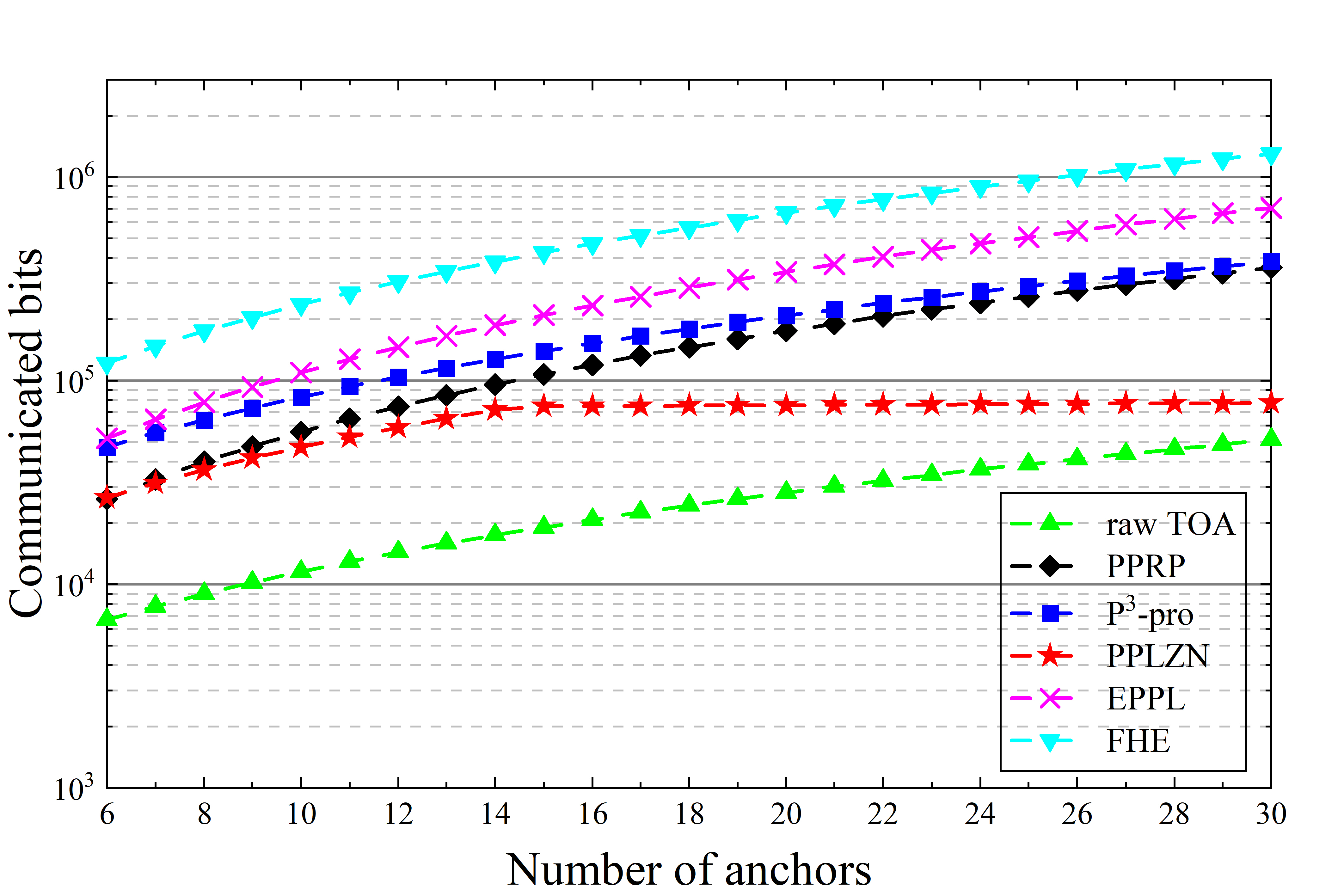}%
\label{fig_second_case}}
% \hfil
\subfloat[]{\includegraphics[width=2.4in]{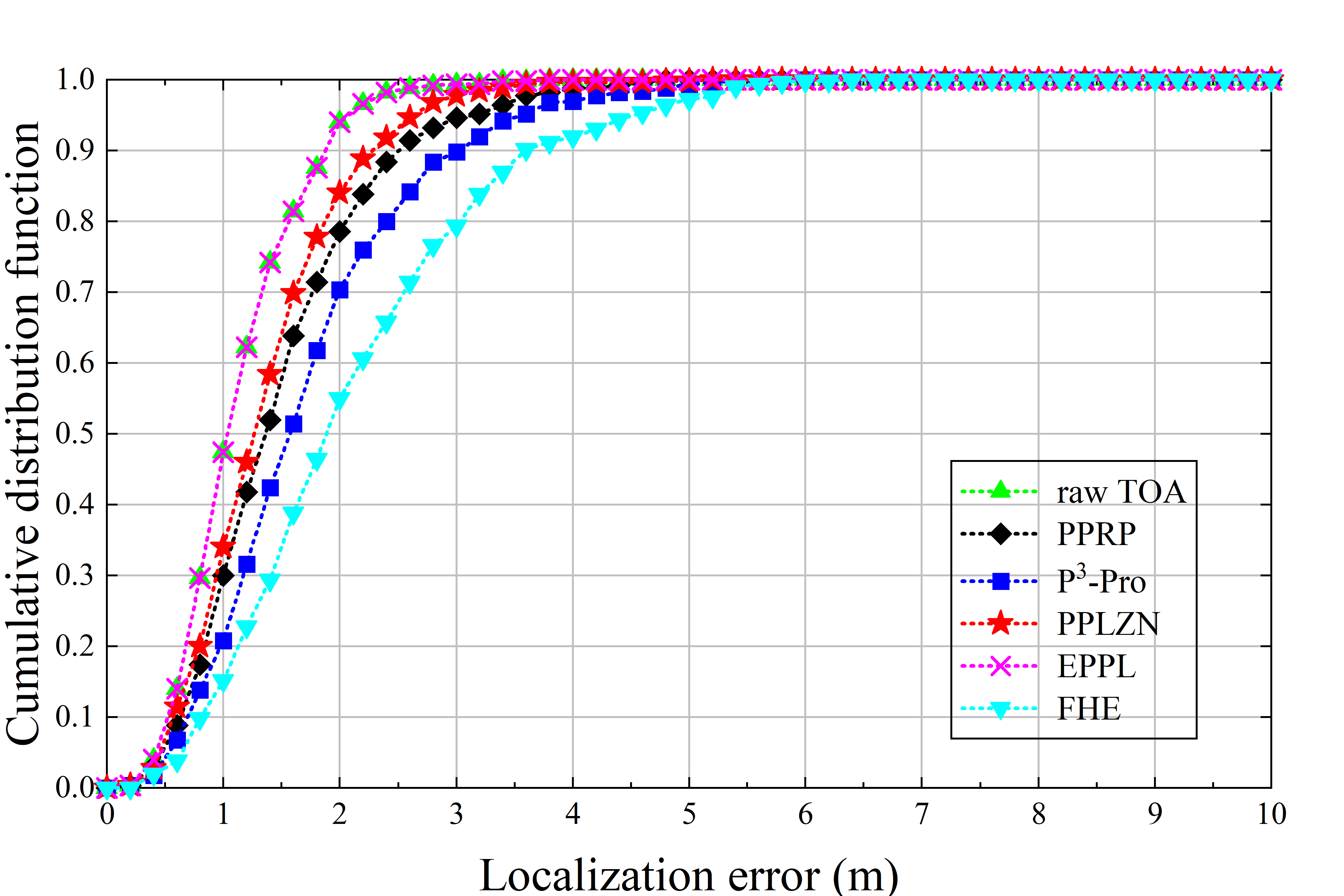}%
\label{fig_third_case}}
% \subfloat[]{\includegraphics[width=3.5in]{prommses.eps}%
% \label{fig_fourth_case}}
\caption{Comparison between different schemes. (a) Total computation time of different schemes. (b) Communication cost of different schemes. (c) Localization accuracy of different schemes.}
% (d) Root Mean Square Error (RMSE) of localization error with different speeds.
\label{fig_sim}
\end{figure*}
\textbf{Computation Overhead:} We first compare the time complexity of different schemes. As shown in Fig. \ref{fig_first_case}, the total computation time varies with the number of anchors. In the PPRP scheme, each anchor uploads location-related data and distance measurements to two location management function (LMF) servers using lightweight additive secret sharing (ASS), avoiding heavy cryptographic operations. The LMF servers then perform homomorphic matrix computations collaboratively to obtain targer's position without reconstructing raw data. However, as the anchor count rises, the number of non-linear operations grows exponentially due to the separate encryption and processing of two secret shares, leading to a rapid increase in computation time. The $\text P^3$-pro scheme primarily employs Shamir secret sharing (SSS) to obscure anchor locations and uses homomorphic encryption for server-side positioning \cite{24}. Since only a small subset of secrets require Paillier encryption, its overall computation time remains relatively low.In contrast, EPPL uses an adjacent subtraction-based model with matrix decomposition and zero-sum noise to achieve privacy without homomorphic encryption, resulting in optimal computational efficiency.The proposed PPLZN approach relies partially on Paillier encryption for generating zero-sum noise and encrypting sensitive data, while other steps use simpler zero-sum noise operations. Thus, its computational cost is dominated by homomorphic computations. When the number of anchors $m<20$, PPLZN performs slightly worse than $\text P^3$-pro; when $m\ge 20$, it outperforms $\text P^3$-pro, reducing total computation time by 45.5\% at $m = 30$. These results demonstrate that PPLZN significantly improves computational performance under high anchor counts.

\textbf{Communication Overhead:} As shown in Fig. \ref{fig_second_case}, the communication overhead—measured in transmitted bits—is directly influenced by cryptographic complexity. In PPRP, the secret from each anchor is split into two ciphertexts, and subsequent ciphertext operations are performed. As the number of anchors increases, these ciphertexts expand steadily, leading to moderate communication overhead. By contrast, $\text P^3$-Pro incurs higher communication overhead due to SSS, which requires each anchor to distribute shares to all others, resulting in increased data transmission. However, as the number of anchors grows, PPRP’s use of homomorphic encryption causes ciphertext expansion, so its communication cost gradually approaches that of $\text P^3$-Pro. EPPL exhibits the second highest communication overhead. Its broadcast-based zero-sum matrix distribution requires each anchor to transmit data to all others, followed by an aggregation step. This two-phase process significantly increases communication consumption, particularly with large numbers of anchors. PPLZN replaces private shares with zero-sum noise, reducing anchor-to-anchor transmissions by 26\% at $m=15$ compared to $\text P^3$-Pro. when $m\ge 15$, the communication overhead of PPLZN stabilizes near the baseline. This is because anchor selection intrinsically limits communication traffic scale, introducing only minimal additional overhead as the number of anchors increases.

\textbf{Location Accuracy:} As a primary goal of the localization, it is necessary to evaluate the performance of each privacy-preserving scheme under a fixed anchor count of 30. The cumulative distribution functions (CDFs) of the estimation errors are shown in Fig. \ref{fig_third_case}. The key advantage of zero-sum noise is its ability to preserve localization accuracy without cryptographic distortion. As Fig. \ref{fig_third_case} indicates, EPPL—which uses only zero-sum noise—achieves the same precision as raw ToA, owing to the self-canceling property of the noise during aggregation. In contrast, PPLZN introduces an approximately 15\%increase in RMSE compared to raw ToA due to anchor selection. Nonetheless, it still outperforms other cryptographic schemes such as $\text P^3$-Pro and PPRP. The significant accuracy loss in $\text P^3$-Pro stems from its Shamir Secret Sharing framework: reconstructing secrets through polynomial interpolation introduces approximation errors, especially with insufficient shares. Although PPRP uses theoretically lossless additive secret sharing, it suffers from quantization error during encryption and decryption. Converting floating-point coordinates to a finite integer domain truncates fractional values, leading to an average positioning drift of 0.35 m.
% \textbf{Mobility Adaptability:} To validate the impact of mobility on the algorithm, simulations of mobility scenarios have been performed and the results are illustrated in Fig. \ref{fig_fourth_case}. Given the minimal effect of mobility on the accuracy of the positioning algorithm, the standard deviation of measurement noise was set to $10^{-14}$ s to accentuate the impact of mobility. The simulation results demonstrate that the movement speed of the nodes does not nearly influence the positioning accuracy in mobility scenarios, and the effect of mobility on the positioning accuracy is minimal.
% , exhibiting enhanced robustness

In summary, the proposed scheme PPLZN achieves strong communication efficiency and localization accuracy while maintaining competitive computational performance in practical settings.

\subsection{Privacy-Preserving Evaluation}
Based on the aforementioned privacy-preserving objectives, the privacy-preserving evaluation in this work is divided into three hierarchical levels:
\begin{itemize}
    \item \textbf{Anchor-to-Anchor}: Prevents any anchor $\mathbb{A}_i$ from accessing the location information of any other anchor $\mathbb{A}_j$ where $i\neq j$.
    \item \textbf{Target-to-Anchor}: Ensures mutual privacy where the target cannot obtain anchor $\mathbb{A}_i$'s location, and no anchor $\mathbb{A}_j$ can obtain the target's position.
    \item \textbf{Node-to-Third Party}: Ensures that any third-party server or aggregator processing positioning data cannot deduce the locations of either targets or anchors.
\end{itemize}

\begin{table}[!t]
\renewcommand{\arraystretch}{1.1}
\centering
\caption{Performance Summary Of Different Privacy-Preserving schemes} 
\label{tableII} 
\begin{tabular}{|c|c|c|c|} % 列定义：左对齐协议列，其余居中
\hline
\multirow{2}*{scheme} & \multicolumn{3}{c|}{Privacy Goal} \\ 
\cline{2-4} % 水平线从第2列到第4列
& Anchor-to-Anchor & Target-to-Anchor & Node-to-Third Party \\
\hline
PPLZN & \checkmark & \checkmark & \checkmark \\
\hline
PPRP & \checkmark & \checkmark & $\times$ \\ % 空白表示未达成
\hline
$\text P^3$-pro & \checkmark & \checkmark & \checkmark \\
\hline
EPPL & \checkmark & \checkmark & N/A \\ % N/A 表示不适
\hline
FHE & \checkmark & \checkmark & \checkmark \\
\hline
\end{tabular}
\end{table}

Table \ref{tableII} provides a comprehensive comparison of the privacy-preserving capabilities of our scheme alongside four benchmark methods. All schemes satisfy the primary requirements for anchor-to-anchor and target-to-anchor privacy protection. In PPLZN, aggregators process data perturbed by zero-sum noise, thereby preventing location disclosure, while Paillier homomorphic encryption ensures computational confidentiality. $\text P^3$-Pro similarly combines SSS and Paillier cryptosystems to preserve location privacy. Although PPRP distributes location data across two servers to reduce the risks of single-point failures or malicious attacks, the system remains vulnerable to information compromise if both servers are breached. EPPL operates without third-party involvement, making this category not applicable (N/A). The FHE scheme, implemented via the Gentry algorithm \cite{40}, utilizes fully homomorphic encryption that supports both addition and multiplication, thereby achieving all three levels of privacy protection.

\section{CONCLUSION}
This study enhances collaborative localization performance through PPLZN, a novel privacy-preserving scheme that integrates zero-sum noise with Paillier Homomorphic Encryption. By ensuring mutual position confidentiality among all participating entities under the honest-but-curious model, PPLZN achieves robust privacy protection while maintaining high positioning accuracy. Key innovations include a cryptographic zero-sum noise mechanism that masks sensitive data yet allows noise cancellation during position estimation, along with the NSA that dynamically optimizes anchor selection to sustain efficiency in dense networks, such as UAV networks. The performance analysis demonstrates significant advantages over existing schemes. Specifically, when the number of anchors reaches 30, PPLZN reduces computational overhead by more than 45.5\% compared to PPRP. At 15 anchors, it reduces communication traffic by over 26\% compared to $\text P^3$-Pro. Although the RMSE increases by approximately 15\% relative to raw ToA, PPLZN still achieves superior positioning accuracy compared to other privacy-preserving schemes. Overall, this work presents an efficient and secure solution for collaborative localization in privacy-sensitive environments.

\bibliographystyle{IEEEtran}
\bibliography{sample.bib}

\vfill

\end{document}

% --- supplement: appendix.tex ---

{\appendix[Proof of Proposition 2]  %\ref{prop2} 
\label{app:A}

To prove Proposition 2 is to prove $\mathbf x = (\mathbf A^\text{T}\mathbf A)^{-1} (\mathbf c + v^{2} \cdot {Decr}(\mathbf e)). $ According to the zero-sum noise mechanism, $\mathbf A^\text{T}\mathbf A$ and $\mathbf c$  are apparently equal before and after encryption. In the Paillier encryption phase, $ \mathbf{e} $ is encrypted by a public key before being sent to the aggregator, and is eventually decrypted on the target side. Note that, in this process, we just need to prove $ \mathbf A^\text{T}\mathbf{b} = \mathbf c + v^{2} \cdot {Decr}(\mathbf e) $.

Since $ \mathbf A^\text{T}\mathbf{b}$ is a $4 \times 1$ matrix, $ \mathbf{b}$ can be calculated by (40). Then, let $\alpha_{ij}$ be the $j$th term in $\alpha_{i}$, $t_{0 i} \oplus t_{i i}$ is denoted as $t_{i}$. Based on (44), it can be known that 
\begin{align}
T_{0i} \otimes \chi_i
&= T_{0i} \otimes\alpha_i \otimes (t_{0 i} \oplus t_{i i})\nonumber\\
&=
\begin{bmatrix}
 \alpha_{i1}T_{0i} \otimes t_{i}\\
 \alpha_{i2}T_{0i} \otimes t_{i} \\
 \alpha_{i3}T_{0i} \otimes t_{i} \\ 
 \alpha_{i4}T_{0i} \otimes t_{i}
\end{bmatrix} =
\begin{bmatrix}
t_{i}^{\alpha_{i1}T_{0i}}\\
t_{i}^{\alpha_{i2}T_{0i}}\\
t_{i}^{\alpha_{i3}T_{0i}}\\
t_{i}^{\alpha_{i4}T_{0i}}
\end{bmatrix},
\end{align}
$\otimes$ and $\oplus$ represent homomorphic multiplication and addition respectively, and then $ \mathbf e $ can be represented as
\begin{align}
\mathbf{e} 
&= (T_{01} \otimes \chi_1) \oplus (T_{02} \otimes \chi_2) \oplus \cdots \oplus (T_{0 m} \otimes \chi_m) \nonumber \\
&=
\begin{bmatrix} 
t_{1}^{\alpha_{i1}T_{01}}\oplus \cdots \oplus t_{m}^{\alpha_{i1}T_{0m}}\\
t_{1}^{\alpha_{i2}T_{01}}\oplus \cdots \oplus t_{m}^{\alpha_{i2}T_{0m}}\\
t_{1}^{\alpha_{i3}T_{01}}\oplus \cdots \oplus t_{m}^{\alpha_{i3}T_{0m}}\\
t_{1}^{\alpha_{i4}T_{01}}\oplus \cdots \oplus t_{m}^{\alpha_{i4}T_{0m}}
\end{bmatrix} 
= 
\begin{bmatrix} 
\prod_{i=1}^m t_{i}^{\alpha_{i1}T_{0i}}\\
\prod_{i=1}^m t_{i}^{\alpha_{i2}T_{0i}}\\
\prod_{i=1}^m t_{i}^{\alpha_{i3}T_{0i}}\\
\prod_{i=1}^m t_{i}^{\alpha_{i4}T_{0i}}
\end{bmatrix},
\end{align}
and $ T_{0 i} - 2T_{i} $ is encrypted by a public-key $ (n,g) $ as
\begin{equation}
\llbracket (T_{0 i} - 2T_{i}) \rrbracket_{pk} = t_{i}=g^{T_{0 i} - 2T_{i}}r^n \bmod n^2.
\end{equation}

\begin{lem}
If $ n = pq $ with $ p $ and $ q $ are two big primes, then for any $ y \in \mathbb{Z}_{n^{2}}^{*} $, it has the following properties:
\begin{equation}
\begin{cases}
y^{\lambda(n)} = 1 \bmod{n} \\
y^{n\lambda(n)} = 1 \bmod{n^{2}},
\end{cases}
\end{equation}
where $ \lambda(n) = \text{lcm}(p - 1, q - 1) $ is the Carmichael function.
\end{lem}

\begin{proof}
Since $ y $ and $ n $ are coprime, according to Euler's theorem, we have
\begin{equation}
y^{\lambda(n)} = 1 \bmod{n}.
\end{equation}
Then, according to Carmichael's theorem, we have
\begin{align}
\lambda(n^{2}) &= \text{lcm}(\lambda(p^{2}), \lambda(q^{2})) 
= \text{lcm}(\phi(p^{2}), \phi(q^{2})) \nonumber \\
&= \text{lcm}(p(p - 1), q(q - 1)) \nonumber \\
&= pq \text{lcm}(p - 1, q - 1) \nonumber \\
&= n\lambda(n),
\end{align}
where $ \phi(n) $ is Euler's totient function, representing the number of positive integers in $ \mathbb{Z}_{n}^{*} $ that are coprime to $ n $.

Therefore,
\begin{equation}
y^{n\lambda(n)} = y^{\lambda(n^{2})} = 1 \bmod{n^{2}}.
\end{equation}
\end{proof}
For the $j$th term in $ \mathbf e $ (denoted as $ \mathbf e_j(j=1,2,3,4) $ ), it can be calculated by the aggregator as
\begin{align}
\mathbf{e}_j&= \prod_{i=1}^m t_{i}^{\alpha_{ij}T_{0i}} \nonumber \\
&= \prod_{i=1}^m (g^{T_{0 i} - 2T_{i}}r^{n})^{\alpha_{ij}T_{0i}} \nonumber \\
&= g^{\sum_{i=1}^m \alpha_{ij}T_{0 i} (T_{0 i} - 2 T_i)} r^{n\sum_{i=1}^m \alpha_{ij}T_{0 i}} \bmod{n^{2}}.
\end{align}
Then,
\begin{align}
\mathbf{e}^{\lambda}_j &= \left(g^{\sum_{i=1}^m \alpha_{ij}T_{0 i} (T_{0 i} - 2 T_i)} r^{n\sum_{i=1}^m \alpha_{ij}T_{0 i}} \right)^{\lambda} \nonumber \\
&= g^{\lambda \sum_{i=1}^m \alpha_{ij}T_{0 i} (T_{0 i} - 2 T_i)} r^{n\lambda \sum_{i=1}^m \alpha_{ij}T_{0 i}} \nonumber \\
&= g^{\lambda \sum_{i=1}^m \alpha_{ij}T_{0 i} (T_{0 i} - 2 T_i)} \bmod{n^2} \quad \text{(by Lemma 1)} \nonumber \\
&= g^{\tilde{\mathcal{A}}_j \lambda} \bmod{n^2}.
\end{align}
Let $\tilde{\mathcal{A}}_j = \sum_{i=1}^m \alpha_{ij}T_{0 i} (T_{0 i} - 2 T_i)$ and we apply Taylor expansion of $ g^{\tilde{\mathcal{A}}_j \lambda} $:
\begin{align}
g^{\tilde{\mathcal{A}}_j \lambda} &= (1 + (g - 1))^{\tilde{\mathcal{A}}_j \lambda} \nonumber \\
&= \sum_{l=0}^{\tilde{\mathcal{A}}_j \lambda} \binom{\tilde{\mathcal{A}}_j \lambda}{l} (g - 1)^l \nonumber \\
&= 1 + (g - 1)\tilde{\mathcal{A}}_j \lambda + \binom{\tilde{\mathcal{A}}_j \lambda}{2} (g - 1)^2 + \cdots,
\end{align}
where $\binom{n}{k}$ is the binomial coefficient. Because $ g $ is selected from $ \mathbb{Z}_{n^{2}}^{*} $ and satisfies
\begin{equation}
\gcd\left(L\left(g^{\lambda} \bmod{n^{2}}\right), n\right) = 1,
\end{equation}
where 
\begin{equation}
L(x) = \frac{x - 1}{n}.  
\end{equation}
Taking a simple example, let \( g = n + 1 \), then
\begin{equation}
g^{\tilde{\mathcal{A}}_j \lambda} \bmod n^{2} = 1 + n\tilde{\mathcal{A}}_j \lambda \bmod n^{2}. 
\end{equation}
Similarly,
\begin{equation}
g^{\lambda} \bmod n^{2} = 1 + n\lambda \bmod n^{2}. 
\end{equation}

When the target receives the encrypted value \( [e_{1}, e_{2}, e_{3}, e_{4}]^\text{T}\) from the aggregator, it decrypts them using its secret-key \( (\lambda, \alpha) \) as
\begin{equation}
 {Decr}(\mathbf{e}_{j}) = L(\mathbf{e}_{j}^{\lambda} \bmod n^{2}) \cdot \alpha \bmod n,  
\end{equation}
where
\begin{equation}
\alpha = \frac{1}{L(g^{\lambda} \bmod n^{2})}.  
\end{equation}
The decrypted result is given as
\begin{equation}
{Decr}(\mathbf{e}_{j}) = \frac{L(\mathbf{e}_{j}^{\lambda} \bmod n^{2})}{L(g^{\lambda} \bmod n^{2})} \bmod n = \tilde{\mathcal{A}}_j.  
\end{equation}
Then, \\
\begin{equation}
{Decr}(\mathbf e) =
\begin{bmatrix} 
{Decr}(\mathbf e_1)\\
{Decr}(\mathbf e_2)\\
{Decr}(\mathbf e_3)\\
{Decr}(\mathbf e_4)
\end{bmatrix}
=
\begin{bmatrix} 
\tilde{\mathcal{A}}_1\\
\tilde{\mathcal{A}}_2\\
\tilde{\mathcal{A}}_3\\
\tilde{\mathcal{A}}_4
\end{bmatrix}.  \label{eq56}
\end{equation}
Take \myref{eq56} in (40):
\begin{equation}
\mathbf A^\text{T}\mathbf{b} = \mathbf c + v^{2} \cdot {Decr}(\mathbf e) =\mathbf c + v^{2} \cdot \sum_{i=1}^m T_{0 i} \alpha_i (T_{0 i} - 2 T_i).
\end{equation}
We can see that $ \mathbf A^\text{T}\mathbf{b}$ is exactly the same as $\mathbf c + v^{2} \cdot {Decr}(\mathbf e)$, and thus the proof of Proposition 2 is completed.

}

\vfill